\documentclass[10pt]{article}
\usepackage[margin=1in]{geometry}
\usepackage[utf8]{inputenc}
\usepackage[english]{babel}
\usepackage[dvipsnames]{xcolor}
\usepackage[normalem]{ulem}
\usepackage{microtype, graphicx, hyperref}
\usepackage[numbers]{natbib}
\usepackage[noline, boxed]{algorithm2e}
\usepackage{epstopdf}
\bibliographystyle{plain}


\usepackage{bm, amssymb, amsmath, amsthm}
\newcommand{\I}{\mathcal{I}}   
\newcommand{\E}{\mathbb{E}}    
\newcommand{\bO}{\mathcal{O}}  
\newcommand{\Nd}{\mathcal{N}}  
\newcommand{\sv}{\, | \,}      
\newcommand{\eps}{\varepsilon} 
\newcommand{\bx}{\bm{x}}       
\newcommand{\by}{\bm{y}}       
\newcommand{\bv}{\bm{v}}       
\newcommand{\bu}{\bm{u}}       
\newcommand{\bz}{\bm{z}}       
\newcommand{\bth}{\bm{\theta}} 
\newcommand{\bS}{\bm{\Sigma}}  
\newcommand{\bR}{\bm{R}}       
\newcommand{\bOm}{\bm{\Omega}} 
\newcommand{\abs}[1]{\left|#1\right|}      
\newcommand{\br}[1]{\overline{#1}}         
\newcommand{\beps}{\bm{\eps}}              
\newcommand{\ind}[1]{\bm{1}_{\set{#1}}}    
\newcommand{\dif}{\mathop{}\!\mathrm{d}}   
\newcommand{\norm}[2][]{\left\Vert#2\right\Vert_{#1}} 
\newcommand{\set}[1]{\left\{ #1 \right\}}             
\newcommand{\mat}[1]{\begin{bmatrix}#1\end{bmatrix}}  

\newcommand{\tbOm}{\tilde{\bm{\Omega}}} 
\newcommand{\tL}{\tilde{\bm{L}}} 
\newcommand{\hbz}{\hat{\bz}}

\newcommand{\gee}{g_{\bth}}

\newcommand{\btau}{\bm{\tau}} 

\DeclareMathOperator*{\argmin}{arg\,min}   

\newtheorem{theorem}{Theorem}
\newtheorem*{theorem*}{Theorem}
\newtheorem{lemma}{Lemma}
\newtheorem*{lemma*}{Lemma}

\newtheorem*{corollary*}{Corollary}

\newtheorem*{definition*}{Definition}
\newtheorem{proposition}{Proposition}
\newtheorem*{proposition*}{Proposition}

\newtheorem*{conjecture*}{Conjecture}

\def\spacingset#1{\renewcommand{\baselinestretch}%
{#1}\small\normalsize} \spacingset{1}

\newcommand{\rev}[1]{\textcolor{black}{#1}}

\title{A Scalable Method to Exploit Screening in Gaussian Process Models with Noise}
\author{
Christopher J. Geoga
\quad
Michael L. Stein \\ \\
Dept. of Statistics, Rutgers University
}
\date{}

\begin{document}

\maketitle 

\begin{abstract}
A common approach to approximating Gaussian log-likelihoods at scale exploits
the fact that precision matrices can be well-approximated by sparse matrices in
some circumstances. This strategy is motivated by the \emph{screening effect},
which refers to the phenomenon in which the linear prediction of a process $Z$
at a point $\bx_0$ depends primarily on measurements nearest to $\bx_0$. But
simple perturbations, such as i.i.d.  measurement noise, can significantly
reduce the degree to which this exploitable phenomenon occurs.  While strategies
to cope with this issue already exist and are certainly improvements over
ignoring the problem, in this work we present a new one based on the EM
algorithm that offers several advantages. While in this work we focus on the
application to Vecchia's approximation \citep{vecchia1988}, a particularly
popular and powerful framework in which we can demonstrate true second-order
optimization of M steps, the method can also be applied using entirely
matrix-vector products, making it applicable to a very wide class of precision
matrix-based approximation methods. 
\end{abstract}

\bigskip

\noindent {\it Keywords:} Gaussian processes, Vecchia's approximation, EM
algorithm, Stochastic trace estimation

\newpage


\section{Introduction} \label{sec:intro}

Gaussian process models are increasingly important in a broad variety of fields.
In spatial statistics, they often arise via process models like
\begin{equation} \label{eq:model}
  Y(\bx) = Z(\bx) + \eps(\bx),
\end{equation}
where $Z$ is assumed to be a Gaussian process with nontrivial dependence
structure and $\eps$, typically independent of $Z$, is Gaussian white noise or
something with a comparably simple covariance structure. Throughout this work,
we will assume that $Z$ is mean-zero, although extensions to linear mean models
would be straightforward. When measured at a finite number of locations
$\set{\bx_j}_{j=1}^n$, say, this corresponds to the distributional model for
$\bz = \set{Z(\bz_j)}_{j=1}^n$ and $\beps$ defined similarly given by
$
  \by = \bz + \beps \sim \Nd\set{\bm{0}, \, \bS + \bR},
$
where $\bS$ is the covariance matrix for $\bz$ and $\bR$ denotes the covariance
matrix of $\beps$, with a prototypical example of $\bR = \eta^2 \I$, in which
case $\eps$ is often called a \emph{nugget} in the geostatistical literature. A
particularly common modeling strategy is to write a parametric model
$K_{\bth}(\cdot, \cdot)$ for the covariance function of $Z$, so that
$\bS_{j,k}(\bth) = \text{Cov}(Y(\bx_j), Y(\bx_k)) = K_{\bth}(\bx_j, \bx_k)$, and
then to numerically optimize the negative log-likelihood with respect to
parameters $\bth$, where the parameter set $\bth$ includes any parameters used
in $\bR$ as well.

When the data size $n$ is sufficiently large, operations with the typically
dense covariance matrix $\bS$ can be prohibitively expensive. Many direct matrix
approximations have been applied to this problem in the past in order to cope
with this issue, such as low-rank approximations \citep{cressie2008}, matrix
tapering \citep{kaufman2008} (one of the few approximating paradigms to have
supporting theory), implicit methods that employ fast algorithms like the fast
multipole method (FMM) and its descendants \citep{anitescu2012}, hierarchical
matrices \citep{ambikasaran2016,litvinenko2019,geoga2020,chen2021}, and many
others (to say nothing of dynamical approximations that are less directly
matrix-oriented \citep{wikle1999,stroud2001,katzfuss2012}). In several of these
strategies, adding a diagonal or otherwise sufficiently structured perturbation
is not problematic.  But direct covariance matrix-space methods come with
significant drawbacks, such as the difficulty of ensuring that approximations
for $\bS$ are positive definite while maintaining flexibility (see
\citep{chen2021} for a notable exception, however).  Not only is this necessary
in the sense of being required for a valid distributional model, it is
practically required because the Gaussian log-likelihood contains a
log-determinant. And while this issue can be avoided by solving the score
equations instead (see \cite{stein2013}) or using stochastic estimators for the
log-determinant based on Lanczos quadrature (see, \citep{ubaru2017}, or the
growing field of ``Bayesian optimization" \citep{mockus2012}), these estimators
can have difficulty being accurate enough for optimization.

The other primary thrust in approximating $\bS$ is to find sparse approximations
to $\bOm = \bS^{-1}$. Perhaps surprisingly, it is in some sense easier to obtain
valid positive-definite matrices with this approach than it is with the
hierarchical matrix approach, in part because such approximations are often
based on valid \emph{process} approximations. A good example of the
precision-based approximations is the class of so-called \emph{Vecchia
approximations} \citep{vecchia1988}, \rev{which will be the specific methods
that we will be using to showcase our estimation algorithm in this work}, which
effectively apply (compound-) Markovian-like assumptions that induce conditional
independence. Vecchia's approximation exploits the fact that any multivariate
density can be expanded in terms of conditional densities, in particular that
$
  p(y_1, y_2, ...,y_n) = p(y_1) \prod_{j=2}^n p(y_j \sv y_1, ..., y_{j-1}),
$
and attempts to approximate this quantity by instead conditioning on a subset of
points that hopefully are of similar predictive power as all of the past
observations, so that
$
  p(y_1, y_2, ...,y_n) \approx p(y_1) \prod_{j=2}^n p(y_j \sv y_{\sigma(j)}),
$
where $\sigma(j) \subseteq [j-1]$ and is typically $\bO(1)$ in size. With that
size constraint on the conditioning sets, evaluating the approximated
log-likelihood can be done in linear complexity (see
\citep{stein2004, katzfuss2021} for more details). While typically presented in
terms of writing many small likelihoods as above, Vecchia approximations can
also be considered through the lens of sparse approximations to precision
matrices \citep{sun2016, finley2019} or sparse symmetric factors of precision
matrices \citep{katzfuss2021, schafer2021}.  Good examples of methods that more
directly approximate the \emph{process}, as opposed to the approximating the
\emph{matrix} in some purely algebraic way, would be the so-called \emph{Markov
random field} (MRF) models \citep{rue2005}, which employ a graphical structure
to approximate the original process in a way that directly creates sparsity in
the precision matrix of the proxy process, and the closely related
SPDE/stochastic finite element approach \citep{lindgren2011, girolami2021}.

The issue of the perturbing noise $\eps$ in this latter paradigm is twofold.
Most substantively, sparse approximations to precision matrices---including the
Vecchia approximation introduced above---crucially depend on the \emph{screening
effect}, which is the phenomenon by which predictions depend very little on
far-away measurements when conditioned on nearby measurements
\citep{stein2002,stein2011}. Additive white noise, for example, severely reduces
the degree to which this phenomenon occurs \citep{stein2011,katzfuss2021}, and
so if such approximations are applied directly to the kernel with the added
diagonal perturbation, their accuracy with respect to typical assessments such
as the Kullback-Liebler (KL) divergence is significantly lowered
\citep{katzfuss2021}, although this admittedly does not necessarily imply that
the resulting estimators one obtains by maximizing the worse likelihood
approximation are in any sense ``worse" (see \citep{stein2004} for an example of
this phenomenon). In our experience, however, particularly with singleton
prediction sets (see \citep{stein2004} for discussion), point estimates are
indeed materially worsened in the sense of being farther from the MLE and having
a lower terminal log-likelihood.

The second issue is more practical: if the likelihood requires a log-determinant
and solving a linear system, it is difficult to get around the requirement for a
matrix factorization of some perturbation of $\tbOm$. As will be discussed
further in the next section, the typical perturbation is $\tbOm + \bR^{-1}$,
which needs to be factorized for the log-determinant at minimum. In the case of
Vecchia approximations, the current state-of-the-art provides methods for
directly assembling Cholesky (or, more generally, symmetric) factors of
(permutations of) $\tbOm$ so that $\tbOm = \tL \tL^T$, which means that when
there is not perturbative noise one can directly assemble the factorization and
avoid computing it with a sparse linear algebra library. But in the case of
perturbative noise, methods such as \citep{katzfuss2021} and \citep{schafer2021}
both unfortunately require first the assembly of $\tbOm = \tL \tL^T$ and then
the \emph{re}factorization of $(\tL \tL^T) + \bR^{-1}$ or some similar matrix,
bringing us back to the original problem. Sparse matrix factorizations have
several issues: for one, it is challenging to choose permutations of rows and
columns that leads to optimal sparsity of the factors \citep{saad2003}.  While
in practice the blow-up of non-zero elements is not always a concern, it can
also easily happen if one is not careful.  The second issue pertains primarily
to optimization, and addressing this problem is another focus of this work:
derivatives of symmetric factors of precision matrices with respect to kernel
parameters are challenging to work with. The simplest explanation for this is
that the derivative of Cholesky factor of a matrix requires several
matrix-matrix operations that are not easy to avoid \citep{murray2016}, which
even if one disregards the issue of ``fill-in," which refers to zero entries
becoming non-zero as the result of an algorithm, are computationally expensive,
particularly if one wishes to compute Hessian matrices of the likelihood.

In this work we present a new method for dealing with this issue that is
significantly broader and can be applied with no sparse matrix factorizations
inside of optimization routines, meaning their derivatives will not be required.
This is not only a practical gain in the sense of avoiding expensive operations
(and operations that can potentially harm complexity, as will be discussed later),
but it also enables effective automatic differentiation inside optimization
steps, making it possible to perform second-order optimization of likelihoods
whose derivatives are, at the least, very difficult to program efficiently by
hand. This is in contrast to the methods mentioned above, which either perform
parameter estimation using derivative-free methods or, more commonly, opt to use
Bayesian methods to perform parameter estimation. 

Before moving to a more specific comparison of our method with its most directly
related alternatives, we emphasize again that, \rev{while we apply this method
exclusively to Vecchia approximations in this work}, none of the tools are
actually specific to Vecchia approximations or to a specific structure of $\bR$
beyond something that admits fast solves and, ideally at least,
log-determinants. They have the potential to be useful in the SPDE framework
(see \cite{lindgren2011} for examples), the MRF framework \citep{rue2005},
``nearest-neighbor" Gaussian processes \citep{datta2016,finley2019}, and more,
where in at least some cases analytical methods to obtain symmetric factors are
not available, or where direct methods for solving linear systems with the
approximated matrices are best avoided.  

\subsection{Comparison with existing methods} \label{sec:comparison}

Since this paper will largely focus on the example setting of Vecchia
approximations, we first discuss the state-of-the-art for that specific
approximation. \rev{While they are very widely used, there are really only two
existing methodologies for attempting to address the way that measurement error
damages approximation accuracy for Vecchia approximations, at least for
likelihood-based methods.}

The most direct comparison would be to the ``sparse general
Vecchia" (SGV) method presented in \citep{katzfuss2021}. The idea of the SGV
approximation is to condition on a prudently selected combination of
noise-polluted measurements ($y_j = z_j + \eps_j$) and unobserved noise-free
measurements ($z_j$), and then to integrate out the unobserved values (note the
difference in our notation versus the notation in \citep{katzfuss2021},
however). The scheme that they propose for deciding when to condition on
observed values or unobserved values that need to be integrated out leads to
precision matrices with valuable theoretical guarantees on sparsity structure.
As an approximation of the probability distribution, it also performs
meaningfully better than a na\"ive application of Vecchia using the kernel with
the perturbation included in terms of Kullback-Liebler divergence
\citep{katzfuss2021}.

The other recent work on this topic is \citep{schafer2021}, which provides nice
theoretical guarantees about a specific subset of strategies for point ordering
and conditioning set selection being optimal with respect to KL divergence,
studying the \emph{maximin} ordering introduced in \citep{guinness2018}. The
work in \citep{schafer2021} was focused on studying properties of the matrix
approximation, but it does provide a suggested approach for dealing with the
issue of perturbative noise, which again involves assembling $\tbOm +
\bm{R}^{-1}$ and re-factorizing using an incomplete Cholesky factorization based
on the sparsity pattern of $\tbOm$.  This strategy seems sensible and some
heuristic justification was provided, but it was not the focus of the work and
few numerical details were provided on its accuracy. 

\rev{
In this work, we introduce and explore an alternative strategy that, at the cost
of a more complex estimation procedure (the EM algorithm), keeps the linear
complexity of standard Vecchia approximations while only applying the
approximation to $\bS$, and perhaps most importantly keeps the property that its
estimator is the solution to a set of unbiased estimating equations
\citep{stein2004,heyde2008}. While we will discuss its application only to
Vecchia approximations here, we again emphasize that this framework is equally
applicable to any method that seeks to supply a sparse approximation to
$\bS^{-1}$. Moreover, it is better seen as complementary to existing tools as
opposed to competitive due to the fact that it can simply be used to refine
other estimators.   
}

\section{Estimating covariance parameters using the EM algorithm} \label{sec:method}

\rev{The main idea of our method is to treat the problem of model \ref{eq:model}
as a missing data problem, where the ``missing" data is the data without the
perturbative noise $\beps$, and to use the EM algorithm to estimate parameters
\citep{dempster1977}. The primary observation we make in this section is to
provide a specific form of the ``E function" that is written in terms of
standard likelihoods with just $\bS$ or $\bR$ and an additional trace term that
is particularly well-suited to stochastic approximation}. Using the notation
introduced above, consider $Z$ to be the latent process, with a known covariance
function $K_{\bth}(\bx, \bx')$ but unknown parameters.  For generality, we will
use $\bR$ to denote the covariance matrix of finite-dimensional samples of the
noise $\eps$, as the methods discussed here apply in much more general settings
than $\bR$ diagonal.  We will assume in this work, however, that $\bR$ is full
rank.  For the duration of this section, we write this section completely
independently of any specific method for approximating $\bS$. In the interest of
readability, we will simply write $\bS^{-1}$ for the precision with the
understanding that applications will substitute the scalable $\tbOm \approx
\bS^{-1}$.  All of the linear algebra written here is exactly true regardless of
what $\bS$ looks like so long as the same matrices are used consistently in all
places. 

\rev{
The crux of the EM algorithm, at least in this setting, is preparing the ``E
function", which is the expectation of the joint log-likelihood of the observed
and missing data (in this case $\by$ and $\bz$) under the conditional law of
$\bz \sv \by$ using some fixed parameters $\bth_0$. By design, the EM algorithm
guarantees that any improvement in the E function gives at least that much
improvement to the marginal log-likelihood of $\by$ \citep{dempster1977}, and so
this naturally gives rise to an iterative process in which one prepares an E
function with parameters $\bth_0$, optimizes or improves said E function over
parameters $\bth$, and then repeats the process with the updated $\bth_0
\leftarrow \bth$.
}
While computing the expected joint log-likelihood of $\by$ and $\bz$ in the
interest of scalability may seem counter-intuitive, the following proposition,
whose elementary proof is provided in Appendix \ref{sec:proofs}, provides a
particularly convenient form for the E function.

\begin{proposition}
  The expected joint negative log-likelihood (``E function") of $(\by, \bz)$ at
  parameters $\bth$ under the law with $\bz \sv \by$ with covariance function
  $K_{\bth_0}$ is given by
  \begin{equation} \label{eq:joint_explike}
    \ell^{\E}_{\bth_0}(\bth)
    =
    \frac{1}{2}\mathrm{tr}\left[
      (\bS(\bth)^{-1} + \bR(\bth)^{-1})(\bS(\bth_0)^{-1} + \bR(\bth_0)^{-1})^{-1}
    \right]
    +
    \ell_{\bS(\bth)}(\hat{\bz}(\bth_0)) + \ell_{\bR(\bth)}(\by-\hat{\bz}(\bth_0)),
  \end{equation}
  where $\ell_{\bm{A}}(\bu)$ is the standard Gaussian negative log-likelihood for
  a vector $\bu$ and covariance matrix $\bm{A}$ and $\hbz(\bth_0) := (\bS(\bth_0)^{-1}
  + \bR(\bth_0)^{-1})^{-1} \bR(\bth_0)^{-1} \by$.
\end{proposition}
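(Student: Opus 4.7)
The plan is to start from the factorization of the joint density implied by the model $\by = \bz + \beps$ with independent Gaussian $\bz \sim \Nd(\bm{0}, \bS(\bth))$ and $\beps \sim \Nd(\bm{0}, \bR(\bth))$, so that
\begin{equation*}
  -\log p_{\bth}(\by, \bz) = -\log p_{\bth}(\by \sv \bz) - \log p_{\bth}(\bz)
  = \ell_{\bR(\bth)}(\by - \bz) + \ell_{\bS(\bth)}(\bz).
\end{equation*}
The E function is then the conditional expectation of this sum under the law of $\bz \sv \by$ at $\bth_0$, so the whole task reduces to evaluating two Gaussian quadratic-form expectations plus deterministic log-determinant terms.

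Next I would identify the conditional law of $\bz \sv \by$ at $\bth_0$ via a standard Bayesian linear-Gaussian update: since $\bz$ is a Gaussian prior with precision $\bS(\bth_0)^{-1}$ and the likelihood $\by \sv \bz$ is Gaussian with precision $\bR(\bth_0)^{-1}$ centered at $\bz$, the posterior is
\begin{equation*}
  \bz \sv \by \sim \Nd\!\left(\hbz(\bth_0),\ \bm{C}(\bth_0)\right),
  \quad \bm{C}(\bth_0) = (\bS(\bth_0)^{-1} + \bR(\bth_0)^{-1})^{-1},
\end{equation*}
with $\hbz(\bth_0) = \bm{C}(\bth_0) \bR(\bth_0)^{-1} \by$ as defined in the statement. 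This is the only nontrivial step, and it is routine once one writes out the log-posterior and completes the square.

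From there I would apply the identity $\E[\bu^T \bm{A} \bu] = \bmu^T \bm{A} \bmu + \mathrm{tr}(\bm{A} \bm{\Sigma})$ for $\bu \sim \Nd(\bmu, \bm{\Sigma})$ separately to the two quadratic forms. For the $\bS(\bth)$ term this directly gives $\hbz(\bth_0)^T \bS(\bth)^{-1} \hbz(\bth_0) + \mathrm{tr}[\bS(\bth)^{-1} \bm{C}(\bth_0)]$; for the $\bR(\bth)$ term one uses that $\by - \bz \sv \by$ has mean $\by - \hbz(\bth_0)$ and the same covariance $\bm{C}(\bth_0)$, yielding $(\by - \hbz(\bth_0))^T \bR(\bth)^{-1} (\by - \hbz(\bth_0)) + \mathrm{tr}[\bR(\bth)^{-1} \bm{C}(\bth_0)]$.

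Finally I would collect terms: the mean-based quadratic pieces, together with the $\log\det$ contributions from $\ell_{\bS(\bth)}$ and $\ell_{\bR(\bth)}$, reassemble exactly into $\ell_{\bS(\bth)}(\hbz(\bth_0)) + \ell_{\bR(\bth)}(\by - \hbz(\bth_0))$, while the two trace pieces combine by linearity of the trace into $\tfrac12 \mathrm{tr}[(\bS(\bth)^{-1} + \bR(\bth)^{-1}) \bm{C}(\bth_0)]$, which matches the stated form once $\bm{C}(\bth_0)$ is expanded. I do not anticipate any real obstacle; the only subtle point is making sure the posterior covariance is written in the precision-sum form $(\bS(\bth_0)^{-1} + \bR(\bth_0)^{-1})^{-1}$ (rather than the equivalent Woodbury form $\bS(\bth_0) - \bS(\bth_0)(\bS(\bth_0) + \bR(\bth_0))^{-1}\bS(\bth_0)$), as it is this representation that makes the trace term in \eqref{eq:joint_explike} amenable to the stochastic approximation emphasized in the paper.
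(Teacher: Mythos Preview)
Your proposal is correct and follows essentially the same approach as the paper: identify the conditional law $\bz \sv \by$ in precision-sum form, decompose the joint negative log-likelihood as $\ell_{\bR(\bth)}(\by-\bz)+\ell_{\bS(\bth)}(\bz)$, and take expectations of the two quadratic forms using $\E[\bu^T\bm{A}\bu]=\bmu^T\bm{A}\bmu+\mathrm{tr}(\bm{A}\bm{\Sigma})$. The only cosmetic difference is that the paper reaches the decomposition via an explicit block-matrix factorization of the joint covariance, whereas you obtain it directly from the Bayes factorization $p_{\bth}(\by,\bz)=p_{\bth}(\by\sv\bz)p_{\bth}(\bz)$; both routes are elementary and yield the identical expression.
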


The full process for estimating parameters $\bth$ thus reduces to solving the
fixed-point problem $F(\bth) = \bth$, where $F(\bth_0) = \argmin_{\bth}
\ell^{\E}_{\bth_0}(\bth)$ (where evaluation of $F$ is sometimes called an ``M
step").  If one strictly maximizes the E function described here and performs
direct Picard iteration to solve the fixed point problem, this is exactly the
standard EM algorithm. If one simply improves the E function, perhaps by a fixed
number of Newton iterations, this procedure is a case of the generalized EM
algorithm \citep{dempster1977,neal1998}. \rev{The original theory from the EM
algorithm in \cite{dempster1977} gives that, in this setting, any local
minimizer of the marginal negative log-likelihood}
\begin{equation*} 
\rev{
  2\ell(\bth \sv \by) = \log \abs{\bS(\bth) + \bR(\bth)} + \by^T
  (\bS(\bth) + \bR(\bth))^{-1} \by
}
\end{equation*}
\rev{
is a fixed point of the above function $F$. The primary point of this additional
complexity is that this framework gives the practitioner the option to
approximate only $\bS(\bth)^{-1}$, as opposed to $(\bS(\bth) + \bR(\bth))^{-1}$,
and in doing so hopefully utilize much more accurate approximations and get
better estimates as a result. We provide the following further guarantee that
the estimates one obtains from this EM iteration corresponds to solving a set of
unbiased estimating equations if the approximation to $\bS(\bth)^{-1}$ gives
unbiased estimating equations in the noiseless case, which is trivially
satisfied by Vecchia approximations.
\begin{theorem}
  Let $\tbOm(\bth) \approx \bS(\bth)^{-1}$ be an approximated precision matrix
  with full rank that is differentiable with respect to $\bth \in \bm{\Theta}$
  and such that
  \begin{equation} \label{eq:appx_score}
    \E_{\bth} \nabla_{\bth} \set{
      - \log \abs{\tbOm(\bth)} + \bm{z}^T \tbOm(\bth) \bm{z}
    }
    = \bm{0},
  \end{equation}
  where $Z$ is a Gaussian process whose finite-dimensional covariance matrices
  are $\bS(\bth)$. Then if $Y = Z + \eps$ as in (\ref{eq:model}) and $\bR(\bth)$
  is indexed by non-overlapping parameters (so that either $\partial_{\theta_j}
  \tbOm(\bth) = \bm{0}$ or $\partial_{\theta_j} \bR(\bth) = \bm{0}$ for all
  $\bth$) and $\bm{\Theta}$ is open, the following two facts hold:
  \begin{enumerate}
  \item  Any local minimizer of the approximated negative log-likelihood
  \begin{equation} \label{eq:approx_nll}
      \log \abs{\tbOm(\bth)^{-1} + \bm{R}(\bth)} + \by^T (\tbOm(\bth)^{-1} +
      \bR(\bth))^{-1} \by
  \end{equation}
  is a fixed point of the EM iteration that also uses $\tbOm(\bth)$ in place of
  $\bS(\bth)^{-1}$.
  \item The gradient of (\ref{eq:approx_nll}) provides unbiased estimating
  equations for $\bth$.
  \end{enumerate}
\end{theorem}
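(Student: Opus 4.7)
The plan is to interpret equation \pref{eq:approx_nll} as the exact marginal negative log-likelihood of the valid Gaussian latent-variable model in which $\bz \sim \Nd(\bm{0}, \tbOm(\bth)^{-1})$ and $\beps \sim \Nd(\bm{0}, \bR(\bth))$ are independent with $\by = \bz + \beps$. Under this identification, Proposition 1 (with $\bS(\bth)^{-1}$ everywhere replaced by $\tbOm(\bth)$) supplies the E function $\ell^{\E}_{\bth_0}$ of the EM iteration, and the standard EM machinery applies verbatim.

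For part 1, I would use the classical decomposition $L_{\text{approx}}(\bth) = \ell^{\E}_{\bth_0}(\bth) - H(\bth \sv \bth_0)$, with $H(\bth \sv \bth_0) := -\E_{\bz \sv \by, \bth_0}[\log p_{\text{approx}}(\bz \sv \by, \bth)]$. Gibbs' inequality places the unique maximum of $H(\cdot \sv \bth_0)$ at $\bth_0$, so $\nabla_\bth H|_{\bth = \bth_0} = \bm{0}$ and $\nabla^2_\bth H|_{\bth = \bth_0} \preceq \bm{0}$. Any local minimizer $\bth^*$ of $L_{\text{approx}}$ lies in the interior of $\bm{\Theta}$ because $\bm{\Theta}$ is open, so $\nabla L_{\text{approx}}(\bth^*) = \bm{0}$ and $\nabla^2 L_{\text{approx}}(\bth^*) \succeq \bm{0}$. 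Setting $\bth_0 = \bth^*$ in the decomposition then yields $\nabla \ell^{\E}_{\bth^*}(\bth^*) = \bm{0}$ and $\nabla^2 \ell^{\E}_{\bth^*}(\bth^*) \succeq \bm{0}$, so $\bth^*$ is a local minimizer of $\ell^{\E}_{\bth^*}(\cdot)$ and hence a fixed point of the M-step map $F$.

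For part 2, I would apply the same EM identity $\nabla L_{\text{approx}}(\bth)|_{\bth = \bth_0} = \nabla \ell^{\E}_{\bth_0}(\bth)|_{\bth = \bth_0}$ together with the factorization $\log p_{\text{approx}}(\by, \bz \sv \bth) = \log p_{\text{approx}}(\bz \sv \bth) + \log p(\by \sv \bz, \bth)$. The non-overlap hypothesis forces each coordinate of the gradient to pick out exactly one of these summands: the $\bR$-coordinates reduce to the standard Gaussian score identity for the correctly specified noise model applied to $\by - \hbz(\bth_0)$, while the $\tbOm$-coordinates reduce to the noiseless approximate score with the data-dependent $\bz \bz^T$ replaced by $\E_{\bth_0}[\bz \bz^T \sv \by] = \Sigma_z + \hbz \hbz^T$, where $\Sigma_z := (\tbOm + \bR^{-1})^{-1}$. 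Taking the outer true-model expectation, using $\E_{\text{true}}[\hbz \hbz^T] = \Sigma_z \bR^{-1}(\bS + \bR) \bR^{-1} \Sigma_z$, and applying the algebraic identity
\[
  \Sigma_z - \tbOm^{-1} + \Sigma_z \bR^{-1} (\bS + \bR) \bR^{-1} \Sigma_z = \Sigma_z \bR^{-1}(\bS - \tbOm^{-1}) \bR^{-1} \Sigma_z,
\]
which follows by a short manipulation using $\Sigma_z^{-1} = \tbOm + \bR^{-1}$, collapses the $\tbOm$-expected gradient into a trace of $\partial_j \tbOm$ against a matrix in the direction of $\bS - \tbOm^{-1}$, and the hypothesis \pref{eq:appx_score} then closes the argument.

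The step I expect to be the main obstacle is this last reduction: arguing that the resulting \emph{weighted} trace $\mathrm{tr}\{\Sigma_z \bR^{-1}(\bS - \tbOm^{-1}) \bR^{-1} \Sigma_z \, \partial_j \tbOm\}$ vanishes given only the \emph{unweighted} orthogonality $\mathrm{tr}\{(\bS - \tbOm^{-1})\partial_j \tbOm\} = 0$ that \pref{eq:appx_score} provides pointwise. This appears to require invoking the hypothesis at every $\bth \in \bm{\Theta}$ and exploiting how $\tbOm(\bth)$ is parameterized, rather than just the single pointwise consequence. The remaining ingredients---the EM identity, Gibbs' inequality, the non-overlap splitting, and the matrix calculus producing the algebraic identity above---are standard and I would expect them to be routine bookkeeping.
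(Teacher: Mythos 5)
Your argument for part 1 is essentially the standard EM stationarity argument (the decomposition of the marginal NLL into the E function minus a cross-entropy term whose gradient vanishes at $\bth_0$), which is the same content the paper delegates to \citet{dempster1977}; apart from a sign slip (with your definition, Gibbs' inequality makes $H(\cdot \sv \bth_0)$ \emph{minimal} at $\bth_0$, which is what you actually use) and the usual looseness about local stationary points of the M objective versus the global $\argmin$ defining $F$, that half is fine. The genuine gap is in part 2, and it is exactly the step you flag at the end --- but it is not an obstacle that ``invoking the hypothesis at every $\bth$'' or exploiting the parameterization will remove. The hypothesis \pref{eq:appx_score} only delivers the unweighted identity $\mathrm{tr}\set{(\partial_{\theta_j}\tbOm(\bth))(\bS(\bth)-\tbOm(\bth)^{-1})}=0$, and the weighted traces your reduction requires (with weights built from $(\tbOm+\bR^{-1})^{-1}$ and $\bR^{-1}$) simply do not follow from it: already for a three-point Vecchia approximation that conditions only on the immediately preceding point, with a nugget $\bR=\phi\I$, one can compute that the weighted traces are nonzero while the unweighted one vanishes. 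So the direct expected-marginal-score computation you set up cannot be closed, and it is not how the theorem is proved. A second, related problem is your handling of the $\bR$-coordinates: under the true law $\by-\hbz(\bth_0)$ is \emph{not} distributed as $\Nd(\bm{0},\bR(\bth))$, so ``the standard Gaussian score identity for the correctly specified noise model applied to $\by-\hbz(\bth_0)$'' is not available.

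The paper's proof of part 2 takes a different route that never touches these traces. Its key device (Lemma 1 of the appendix) is to view the approximated marginal density as a convolution $\gee \star h_{\bm{\phi}}$, where $\gee$ is the misspecified Gaussian density with covariance $\tbOm(\bth)^{-1}$ and $h_{\bm{\phi}}$ is the law of the independent noise, and to differentiate under the integral sign: using Fubini--Tonelli, domination of $\nabla_{\bth}\gee$, the independence $Z \perp \eps$, and the disjointness of the parameter blocks, the $\bth$-gradient passes inside the convolution and the expectation collapses onto $\E_{\bth}\nabla_{\bth}\gee(Z)$, which is zero by \pref{eq:appx_score}. Lemma 2 then verifies the integrability and domination conditions for the Gaussian choice of $\gee$ and identifies \pref{eq:approx_nll} with the negative log of $\gee \star h_{\bm{\phi}}$, and the theorem follows by applying Lemma 2 \emph{twice}, once with $(Z,\bth')$ as the approximated process and $(\eps,\bm{\phi})$ as the noise, and once with the roles reversed so that the noise parameters are covered --- the role-reversal being the device that replaces your (invalid) direct treatment of the $\bR$-coordinates. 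That convolution/Fubini mechanism is the missing idea in your proposal; without it the reduction you describe is not routine bookkeeping but a dead end.
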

The proof of this theorem is provided via a sequence of lemmas in Appendix
\ref{sec:proofs}.  If one were to know a priori that their approximated
log-likelihood were truly unimodal as well as some technical conditions on the
curvature of the log-likelihood surface detailed in Theorem $4$ of
\citep{dempster1977} in its subsequent discussion, conclusion $1$ of the theorem
could likely be strengthened to convergence to a global minimizer.
Unfortunately, even for simple models (like the Mat\'ern) and exact linear
algebra unimodality is not provable.  We have seen no evidence of multiple
minimizers for such models, however, and so there is good reason to expect that
the EM iterations will converge to the true minimizer of the approximated
log-likelihood.
}

\subsection{Symmetrized Hutchinson-type Trace Estimation} \label{sec:saa}

While it may seem problematic, the matrix that appears in the \rev{trace term of
the} E function, given by
\begin{equation} \label{eq:trace_matrix}
    (\bS(\bth)^{-1} + \bR(\bth)^{-1})(\bS(\bth_0)^{-1} + \bR(\bth_0)^{-1})^{-1},
\end{equation}
has several favorable properties that make stochastic trace estimation an
appealing option.  Hutchinson-type stochastic trace estimation
\citep{hutchinson1990} is based on the fact that $\E \bv^T \bm{A} \bv =
\mathrm{tr}(\bm{A})$ for any random vector $\bv$ such that $\E \bv = \bm{0}$ and
$\E \bv \bv^T = \I$. If one can draw samples from the law of $\bv$, then, the
empirical estimator $S^{-1} \sum_{j=1}^S \bv_j^T \bm{A} \bv_j$, often also
called the ``sample average approximation" (SAA) in the stochastic trace
estimation literature, will be an unbiased estimator for $\mathrm{tr}(\bm{A})$
\citep{hutchinson1990,avron2011,stein2013}.  The main challenge of these
estimators is that the variance of the estimator may depend on matrix properties
like (but not limited to) the condition number $\kappa(\bm{A})$ or its diagonal
concentration, so for a large and poorly conditioned matrix (like a large
covariance matrix) the performance can be unsuitable for something as precise as
maximum likelihood estimation.  A significant gain can be made within the
framework of Hutchinson-type estimators, however, by ``symmetrizing"; a common
fact about traces is that $\mathrm{tr}(\bm{A} \bm{B}) = \mathrm{tr}(\bm{B}
\bm{A})$, and so by extension if $\bm{A} = \bm{L} \bm{L}^T$ then
$\mathrm{tr}(\bm{A} \bm{B}) = \mathrm{tr}(\bm{L}^T \bm{B} \bm{L}) =
\mathrm{tr}(\bm{L} \bm{B} \bm{L}^T)$.  Hutchinson estimators for the Gaussian
score equations that use this ``symmetrized" form for the trace, for example,
have variances that are at worst equal to their un-symmetrized counterparts
\citep{stein2013}, but have been demonstrated to provide a significant gain in
practice \citep{geoga2020,geoga2021,beckman2022}. Further, in the notation of
the inline example, if $\bm{B}$ is positive definite, then $\bm{L} \bm{B}
\bm{L}^T$ is also positive definite, and in that case the error bounds of
Hutchinson-type estimators are improved further still \cite{roosta2015}.

The primary exploitable property of the matrix (\ref{eq:trace_matrix}) is that
there is good reason to believe that it is well-concentrated on the diagonal.
There are multiple popular options for the distribution of $\bv$, with
Rademacher($\tfrac{1}{2}$) (``random signs") and unit Gaussian entries being two
examples. While the variance of estimators is always smaller with
Rademacher($\tfrac{1}{2}$) entries, Gaussian-based estimators can have better
concentration properties \citep{avron2011}, and the best choice of sampling
vectors is matrix- or purpose-dependent. A particularly good use case for
Rademachaer entries, however, is when the matrix whose trace is being computed
is concentrated on the diagonal. For a general matrix $\bm{A}$, the variance of
$\bv^T \bm{A} \bv$ when $\bv$ has entries that are random signs is given by 
\begin{equation*} 
  \mathrm{Var}\left( \bv^T \bm{A} \bv \right)
  =
  2 \sum_{j \neq k}^n M_{j,k}^2
  =
  2 (\norm[F]{\bm{A}}^2 - \sum_{j=1}^n A_{j,j}^2),
\end{equation*}
compared with $2 \norm[F]{\bm{A}}^2$ when $\bv$ has Gaussian entries
\citep{hutchinson1990,avron2011}. While in general both of these variances can
be very large, it is clear that when $\bm{A}$ is nearly diagonal the variance
reduction can be substantial. And so for a matrix that looks like
$\bm{\Phi}(\bth) \bm{\Phi}(\bth_0)^{-1}$ that is smooth with respect to $\bth$
and with $\norm{\bth - \bth_0}$ small, it is reasonable to expect the variance
of this estimator, even before symmetrization, to be small. For this reason,
Rademacher($\tfrac{1}{2}$) entries perform very well in the setting of this
particular problem.

This particular setting is also one where pre-computation can be used to great
effect. \rev{Specifically, by observing that $\bth_0$ doesn't change for an entire M
step, it is possible and clearly beneficial to pre-compute
$\set{(\bS(\bth_0)^{-1} + \bR(\bth_0)^{-1})^{-1} \bv_j}_{j=1}^S$ and re-use
those pre-applied SAA vectors for the duration of the M step. This computation
can also be done in an inversion-free setting by using iterative methods,
further cutting down the necessary matrix operations to perform estimation.
}
Even better, if one is willing and able to compute a symmetric factorization
$\bS(\bth_0)^{-1} + \bR(\bth_0)^{-1} = \bm{W}(\bth_0) \bm{W}(\bth_0)^T$, where
$\bm{W}(\bth_0)$ is a symmetric factor that admits a fast solve, then one can
pre-compute the vectors $\tilde{\bv}_j = \bm{W}(\bth_0)^{-T} \bv_j$ to use
symmetrized trace estimators that come with potentially enhanced accuracy.
Unless an iterative method has been used to compute $\hbz(\bth_0)$, that
factorization likely has already been computed. \rev{The generic stochastic E
function in the symmetrized form can thus be computed as
\begin{equation} \label{eq:symmstoche}
  \ell^{\E}_{\bth_0}(\bth) \approx
  \frac{1}{2S} \sum_{j=1}^S \left[ \tilde{\bv}_j^T (\bS(\bth)^{-1} +
  \bR(\bth)^{-1})
  \tilde{\bv}_j \right] + \ell_{\bS(\bth)}(\hbz(\bth_0)) + \ell_{\br(\bth)}(\by
  - \hbz(\bth_0)),
\end{equation}
and the un-symmetric form can be computed as 
\begin{equation} \label{eq:asymstoche}
  \ell^{\E}_{\bth_0}(\bth) \approx
  \frac{1}{2S} \sum_{j=1}^S \left[ \bv_j^T (\bS(\bth)^{-1} + \bR(\bth)^{-1})
  \bar{\bv}_j \right] + \ell_{\bS(\bth)}(\hbz(\bth_0)) + \ell_{\br(\bth)}(\by
  - \hbz(\bth_0)),
\end{equation}
where $\bar{\bv}_j = (\bS(\bth_0)^{-1} + \bR(\bth_0)^{-1})^{-1} \bv_j$. In this
form, one can substitute a suitable approximation for $\bS(\bth)^{-1}$ and
obtain a stochastic E function that can be evaluated in the same complexity as
the relevant operations with the approximation to $\bS(\bth)^{-1}$ (assuming
that the relevant operations with $\bR$ are not a computational bottleneck).
Crucially, evaluating this expression requires no new matrix operations or
factorizations when compared to a standard log-likelihood approximation using a
sparse approximation $\tbOm \approx \bS^{-1}$.
}

\rev{
For the specific setting of Vecchia approximations, one can in fact do even
better and evaluate these quadratic forms without ever assembling a
$\bO(n)$-sized matrix to approximate $\bS(\bth)^{-1}$, instead using the
standard evaluation using sums of small conditional likelihoods that is trivial
to parallelize and automatically differentiate.  Specifically, let the standard
Vecchia approximation to $\ell_{\bS(\bth)}$ be given by
$
  2 \tilde{\ell}_{\bth}(\by) = 
    \tilde{\ell}^{\text{det}}(\bth)
    +
    \tilde{\ell}^{\text{qf}}_{\bth}(\by),
$
where the two terms correspond to the sum of small log-determinants and
quadratic forms respectively. Then simply by moving back and forth between the
standard form and the precision matrix form  of the Vecchia approximation (see
\cite{katzfuss2021} for full definitions), (\ref{eq:symmstoche}) can be further
improved to give
\begin{equation} \label{eq:enll_hutch}
  \ell^{\text{Vec}}_{\bth_0}(\bth) =
  \tilde{\ell}^{\text{det}}(\bth)
  +
  \tilde{\ell}^{\text{qf}}_{\bth}(\by)
  +
  \sum_{j=1}^S \tilde{\ell}^{\text{qf}}_{\bth}(S^{-1/2}\tilde{\bv}_j)
  +
  \frac{1}{2S}
  \sum_{j=1}^S \norm[2]{\sqrt{\bm{R}}(\bth)^{-T} \tilde{\bv}_j}^2
  +
  \ell_{\bR(\bth_0)}(\by - \hbz(\bth_0)).
\end{equation}
This specialized form of (\ref{eq:symmstoche}) gives a fully symmetrized and
unbiased estimator for $\ell^{\E}_{\bth_0}(\bth)$ that can be computed in a
single pass over each conditioning set and is matrix-free with respect to $\bS$.
}

\rev{
In summary, the cost of evaluating the stochastic estimator $ $ for the E
function $ $ is effectively equal to the cost of evaluating a standard Vecchia
likelihood for however many i.i.d. samples as one has SAA vectors. An important
point to consider when analyzing the additional computational burden that this
trace estimation poses is that the actual matrix operations of factorizing and
computing the quadratic forms is not necessarily the most costly part of
evaluating Vecchia-approximated likelihoods. Particularly for more complex
covariance functions, the kernel evaluations required to assemble each
conditional covariance are likely to dominate, and so in that sense the burden
of extra quadratic forms is unlikely to be problematic. As a basic example, the
difference in runtime cost between using $S=5$ and $S=150$ sampling vectors in
the problem setting of the next section (a standard four-parameter Mat\'ern
model) is a factor of $1.4$, not a factor of $30$.
}

\rev{
An important topic that requires discussion is the problem of picking the number
$M$ of sampling vectors $\set{\bv_j}_{j=1}^M$. Answering this question precisely
and in full generality is difficult if not impossible since it will depend on
specific aspects of the problem like the covariance function, dimension, and
sampling scheme. But with that said, there are always several easy empirical
diagnostics to assess whether or not the variability of the stochastic trace is
affecting estimates in a problematic way.  One option would be to simply add
more sampling vectors to your existing collection and perform an additional M
step with the same $\bth_0$ and compare the results. In the supplemental
material we offer a demonstration of this diagnostic for the problem setting of
the next section, with the results strongly indicating that even a very small
number of sampling vectors gives very stable estimates. As a rule of thumb, we
would ultimately suggest starting with approximately $70$, if for no other
reason than the fact that it perhaps provides a nice balance of variance
reduction and the diminishing returns of the $S^{-1/2}$ decay rate in standard
deviation.
}

\rev{
We close this section with a pseudo-code style summary of the algorithm to
estimate parameters $\bth$ given in Algorithm \ref{alg:em}, which for maximum
clarity is written in terms of $\tbOm \approx \bS^{-1}$ directly.
}
\begin{algorithm}
\textbf{Initialize:} 
Draw SAA vectors $\set{\bv_j}_{j=1}^M$, with $\bv_{j,k} \sim 
\text{Rademacher}(1/2)$  \\
\While{$\norm{\bth_{j+1} - \bth_j} > \epsilon$}{
  Compute conditional expectation $\hbz(\bth_j) := (\tbOm(\bth_j) +
  \bR(\bth_j)^{-1})^{-1} \bR(\bth_j)^{-1} \by$ 
  \\
  \eIf{Symmetrizing}{
    factorize $\tbOm(\bth_j) + \bR(\bth_j)^{-1} = \bm{W}(\bth_j)
    \bm{W}(\bth_j)^T$ \\
    Compute pre-(half-)solved SAA vectors $\tilde{\bv}_j = \bm{W}(\bth_j)^{-T} \bv_j$,
    \quad $j=1, ..., S$ \\
    Optimize or improve the symmetrized stochastic E function
    (\ref{eq:symmstoche}) to obtain $\bth_{j+1}$
  }{
    Compute pre-solved SAA vectors $\bar{\bv}_j = (\tbOm(\bth_j) +
    \bR(\bth_j)^{-1})^{-1} \bv_j$, \quad $j=1, ..., S$ \\
    Optimize or improve the non-symmetrized stochastic E function
    (\ref{eq:asymstoche}) to obtain $\bth_{j+1}$
  }
}
\caption{\rev{An overview of the procedure for estimating parameters $\bth$
using the EM algorithm and an approximate precision matrix $\tbOm(\bth) \approx
\bS(\bth)^{-1}$}.}
\label{alg:em}
\end{algorithm}

\section{Demonstrations} \label{sec:demo}

\subsection{A Comparison with SGV} \label{sec:comp_sgv}

In this section, we compare our estimation strategy to the one provided in
\citep{katzfuss2021} and made available in the \texttt{GPVecchia} R package.
The following experiment was run $50$ times:
\begin{enumerate}
\item Simulate $15{,}000$ points in two dimensions at random spatial locations in
$[0,1]^2$ of a Mat\'ern process, parameterized as $K_{\bth}(\bx, \bx') =
\sigma^2 \mathcal{M}_{\nu}(\rho^{-1} \norm{\bx - \bx'}) + \eta^2 \ind{\bx =
\bx'}$, where $\mathcal{M}_\nu$ is the Mat\'ern correlation function given by
\begin{equation*} 
  \mathcal{M}_\nu (t) = \frac{2^{1-\nu}}{\Gamma(\nu)} 
  (\sqrt{2 \nu} t)^\nu 
  \mathcal{K}_\nu 
  (\sqrt{2 \nu} t),
\end{equation*}
$\Gamma$ is the gamma function, and $\mathcal{K}_\nu$ is the modified
second-kind Bessel function of order $\nu$.  All simulations used fixed
parameters $\bth_{\text{true}} = (\sigma^2, \rho, \nu, \eta^2) = (10, 0.025,
2.25, 0.25)$.  \item Estimate the parameters using the SGV approximation via the
\texttt{vecchia\_estimate} function in \texttt{GPVecchia}, specified with the
maximin ordering \citep{guinness2018} and $10$ nearest-neighbor conditioning
points.  \item Estimate the parameters using our EM procedure\footnote{Code for
the EM procedure and a kernel-agnostic Vecchia likelihood, as well as all of the
scripts used to generate the results in this work, are available at
\texttt{https://github.com/cgeoga/Vecchia.jl}.} with the exact same Vecchia
approximation specification, initializing our EM iteration at the estimator
obtained by using the standard Vecchia approximation with the covariance
function that includes the nugget. For the EM procedure, allow up to $30$
iterations using $72$ SAA vectors with full symmetrization (although all
terminated in approximately $10$-$15$ iterations). For each M step, optimize
using true gradients and Hessians of $\ell^{\text{Vec}}_{\bth_0}(\bth)$
\rev{obtained via automatic differentiation \citep{Revels2016,geoga2022}.}
\end{enumerate}
\rev{
All computations were run on a computer with $16$ GiB of DDR4 RAM and an Intel
i5-11600K processor, using six threads (the number of physical cores). For
context, for data of this size and models of this parameter count, a single
gradient of the exact likelihood takes approximately two minutes to compute
using \texttt{ForwardDiff.jl} \citep{Revels2016}, even if one additionally
exploits manual derivative rules so that all matrix operations use the heavily
optimized LAPACK and OpenBLAS libraries. Even worse, evaluating the Hessian
requires more than 16 GiB of RAM if all second partial derivative matrices are
computed at once, and if only some were computed at a time the cost to
re-assemble derivative matrices as necessary would make performance even worse.
While a big enough computer with enough memory could of course solve this
estimation problem exactly, this setting is already well into the regime where
approximate methods are useful.
}

\begin{figure}[!ht]
  \centering
\begingroup
  \makeatletter
  \providecommand\color[2][]{%
    \GenericError{(gnuplot) \space\space\space\@spaces}{%
      Package color not loaded in conjunction with
      terminal option `colourtext'%
    }{See the gnuplot documentation for explanation.%
    }{Either use 'blacktext' in gnuplot or load the package
      color.sty in LaTeX.}%
    \renewcommand\color[2][]{}%
  }%
  \providecommand\includegraphics[2][]{%
    \GenericError{(gnuplot) \space\space\space\@spaces}{%
      Package graphicx or graphics not loaded%
    }{See the gnuplot documentation for explanation.%
    }{The gnuplot epslatex terminal needs graphicx.sty or graphics.sty.}%
    \renewcommand\includegraphics[2][]{}%
  }%
  \providecommand\rotatebox[2]{#2}%
  \@ifundefined{ifGPcolor}{%
    \newif\ifGPcolor
    \GPcolortrue
  }{}%
  \@ifundefined{ifGPblacktext}{%
    \newif\ifGPblacktext
    \GPblacktexttrue
  }{}%
  \let\gplgaddtomacro\g@addto@macro
  \gdef\gplbacktext{}%
  \gdef\gplfronttext{}%
  \makeatother
  \ifGPblacktext
    \def\colorrgb#1{}%
    \def\colorgray#1{}%
  \else
    \ifGPcolor
      \def\colorrgb#1{\color[rgb]{#1}}%
      \def\colorgray#1{\color[gray]{#1}}%
      \expandafter\def\csname LTw\endcsname{\color{white}}%
      \expandafter\def\csname LTb\endcsname{\color{black}}%
      \expandafter\def\csname LTa\endcsname{\color{black}}%
      \expandafter\def\csname LT0\endcsname{\color[rgb]{1,0,0}}%
      \expandafter\def\csname LT1\endcsname{\color[rgb]{0,1,0}}%
      \expandafter\def\csname LT2\endcsname{\color[rgb]{0,0,1}}%
      \expandafter\def\csname LT3\endcsname{\color[rgb]{1,0,1}}%
      \expandafter\def\csname LT4\endcsname{\color[rgb]{0,1,1}}%
      \expandafter\def\csname LT5\endcsname{\color[rgb]{1,1,0}}%
      \expandafter\def\csname LT6\endcsname{\color[rgb]{0,0,0}}%
      \expandafter\def\csname LT7\endcsname{\color[rgb]{1,0.3,0}}%
      \expandafter\def\csname LT8\endcsname{\color[rgb]{0.5,0.5,0.5}}%
    \else
      \def\colorrgb#1{\color{black}}%
      \def\colorgray#1{\color[gray]{#1}}%
      \expandafter\def\csname LTw\endcsname{\color{white}}%
      \expandafter\def\csname LTb\endcsname{\color{black}}%
      \expandafter\def\csname LTa\endcsname{\color{black}}%
      \expandafter\def\csname LT0\endcsname{\color{black}}%
      \expandafter\def\csname LT1\endcsname{\color{black}}%
      \expandafter\def\csname LT2\endcsname{\color{black}}%
      \expandafter\def\csname LT3\endcsname{\color{black}}%
      \expandafter\def\csname LT4\endcsname{\color{black}}%
      \expandafter\def\csname LT5\endcsname{\color{black}}%
      \expandafter\def\csname LT6\endcsname{\color{black}}%
      \expandafter\def\csname LT7\endcsname{\color{black}}%
      \expandafter\def\csname LT8\endcsname{\color{black}}%
    \fi
  \fi
    \setlength{\unitlength}{0.0500bp}%
    \ifx\gptboxheight\undefined%
      \newlength{\gptboxheight}%
      \newlength{\gptboxwidth}%
      \newsavebox{\gptboxtext}%
    \fi%
    \setlength{\fboxrule}{0.5pt}%
    \setlength{\fboxsep}{1pt}%
    \definecolor{tbcol}{rgb}{1,1,1}%
\begin{picture}(9636.00,6802.00)%
    \gplgaddtomacro\gplbacktext{%
      \csname LTb\endcsname
      \put(831,3741){\makebox(0,0)[r]{\strut{}\footnotesize -3}}%
      \put(831,4138){\makebox(0,0)[r]{\strut{}\footnotesize -2}}%
      \put(831,4534){\makebox(0,0)[r]{\strut{}\footnotesize -1}}%
      \put(831,4931){\makebox(0,0)[r]{\strut{}\footnotesize 0}}%
      \put(831,5327){\makebox(0,0)[r]{\strut{}\footnotesize 1}}%
      \put(831,5724){\makebox(0,0)[r]{\strut{}\footnotesize 2}}%
      \put(831,6120){\makebox(0,0)[r]{\strut{}\footnotesize 3}}%
      \put(963,3598){\makebox(0,0){\strut{}\footnotesize -3}}%
      \put(1541,3598){\makebox(0,0){\strut{}\footnotesize -2}}%
      \put(2119,3598){\makebox(0,0){\strut{}\footnotesize -1}}%
      \put(2698,3598){\makebox(0,0){\strut{}\footnotesize 0}}%
      \put(3276,3598){\makebox(0,0){\strut{}\footnotesize 1}}%
      \put(3854,3598){\makebox(0,0){\strut{}\footnotesize 2}}%
      \put(4432,3598){\makebox(0,0){\strut{}\footnotesize 3}}%
    }%
    \gplgaddtomacro\gplfronttext{%
      \csname LTb\endcsname
      \put(358,4930){\rotatebox{-270}{\makebox(0,0){\strut{}\footnotesize SGV}}}%
      \csname LTb\endcsname
      \put(2697,6285){\makebox(0,0){\strut{}\footnotesize scale $\hat{\sigma}^2 - \sigma_{\text{true}}^2$}}%
    }%
    \gplgaddtomacro\gplbacktext{%
      \csname LTb\endcsname
      \put(5071,3896){\makebox(0,0)[r]{\strut{}\footnotesize -0.02}}%
      \put(5071,4155){\makebox(0,0)[r]{\strut{}\footnotesize -0.015}}%
      \put(5071,4413){\makebox(0,0)[r]{\strut{}\footnotesize -0.01}}%
      \put(5071,4672){\makebox(0,0)[r]{\strut{}\footnotesize -0.005}}%
      \put(5071,4931){\makebox(0,0)[r]{\strut{}\footnotesize 0}}%
      \put(5071,5189){\makebox(0,0)[r]{\strut{}\footnotesize 0.005}}%
      \put(5071,5448){\makebox(0,0)[r]{\strut{}\footnotesize 0.01}}%
      \put(5071,5706){\makebox(0,0)[r]{\strut{}\footnotesize 0.015}}%
      \put(5071,5965){\makebox(0,0)[r]{\strut{}\footnotesize 0.02}}%
      \put(5429,3598){\makebox(0,0){\strut{}\footnotesize -0.02}}%
      \put(6183,3598){\makebox(0,0){\strut{}\footnotesize -0.01}}%
      \put(6937,3598){\makebox(0,0){\strut{}\footnotesize 0}}%
      \put(7691,3598){\makebox(0,0){\strut{}\footnotesize 0.01}}%
      \put(8445,3598){\makebox(0,0){\strut{}\footnotesize 0.02}}%
    }%
    \gplgaddtomacro\gplfronttext{%
      \csname LTb\endcsname
      \put(6937,6285){\makebox(0,0){\strut{}\footnotesize range $\hat{\rho} - \rho_{\text{true}}$}}%
    }%
    \gplgaddtomacro\gplbacktext{%
      \csname LTb\endcsname
      \put(831,812){\makebox(0,0)[r]{\strut{}\footnotesize -0.8}}%
      \put(831,1077){\makebox(0,0)[r]{\strut{}\footnotesize -0.6}}%
      \put(831,1341){\makebox(0,0)[r]{\strut{}\footnotesize -0.4}}%
      \put(831,1606){\makebox(0,0)[r]{\strut{}\footnotesize -0.2}}%
      \put(831,1870){\makebox(0,0)[r]{\strut{}\footnotesize 0}}%
      \put(831,2134){\makebox(0,0)[r]{\strut{}\footnotesize 0.2}}%
      \put(831,2399){\makebox(0,0)[r]{\strut{}\footnotesize 0.4}}%
      \put(831,2663){\makebox(0,0)[r]{\strut{}\footnotesize 0.6}}%
      \put(831,2928){\makebox(0,0)[r]{\strut{}\footnotesize 0.8}}%
      \put(1156,537){\makebox(0,0){\strut{}\footnotesize -0.8}}%
      \put(1541,537){\makebox(0,0){\strut{}\footnotesize -0.6}}%
      \put(1927,537){\makebox(0,0){\strut{}\footnotesize -0.4}}%
      \put(2312,537){\makebox(0,0){\strut{}\footnotesize -0.2}}%
      \put(2698,537){\makebox(0,0){\strut{}\footnotesize 0}}%
      \put(3083,537){\makebox(0,0){\strut{}\footnotesize 0.2}}%
      \put(3468,537){\makebox(0,0){\strut{}\footnotesize 0.4}}%
      \put(3854,537){\makebox(0,0){\strut{}\footnotesize 0.6}}%
      \put(4239,537){\makebox(0,0){\strut{}\footnotesize 0.8}}%
    }%
    \gplgaddtomacro\gplfronttext{%
      \csname LTb\endcsname
      \put(358,1870){\rotatebox{-270}{\makebox(0,0){\strut{}\footnotesize SGV}}}%
      \put(2697,130){\makebox(0,0){\strut{}\footnotesize EM}}%
      \csname LTb\endcsname
      \put(2697,3225){\makebox(0,0){\strut{}\footnotesize smoothness $\hat{\nu} - \nu_{\text{true}}$}}%
    }%
    \gplgaddtomacro\gplbacktext{%
      \csname LTb\endcsname
      \put(5071,718){\makebox(0,0)[r]{\strut{}\footnotesize -0.03}}%
      \put(5071,1102){\makebox(0,0)[r]{\strut{}\footnotesize -0.02}}%
      \put(5071,1486){\makebox(0,0)[r]{\strut{}\footnotesize -0.01}}%
      \put(5071,1870){\makebox(0,0)[r]{\strut{}\footnotesize 0}}%
      \put(5071,2254){\makebox(0,0)[r]{\strut{}\footnotesize 0.01}}%
      \put(5071,2638){\makebox(0,0)[r]{\strut{}\footnotesize 0.02}}%
      \put(5071,3022){\makebox(0,0)[r]{\strut{}\footnotesize 0.03}}%
      \put(5259,537){\makebox(0,0){\strut{}\footnotesize -0.03}}%
      \put(5818,537){\makebox(0,0){\strut{}\footnotesize -0.02}}%
      \put(6378,537){\makebox(0,0){\strut{}\footnotesize -0.01}}%
      \put(6937,537){\makebox(0,0){\strut{}\footnotesize 0}}%
      \put(7496,537){\makebox(0,0){\strut{}\footnotesize 0.01}}%
      \put(8056,537){\makebox(0,0){\strut{}\footnotesize 0.02}}%
      \put(8615,537){\makebox(0,0){\strut{}\footnotesize 0.03}}%
    }%
    \gplgaddtomacro\gplfronttext{%
      \csname LTb\endcsname
      \put(6937,130){\makebox(0,0){\strut{}\footnotesize EM}}%
      \csname LTb\endcsname
      \put(6937,3225){\makebox(0,0){\strut{}\footnotesize nugget $\hat{\eta}^2 - \eta_{\text{true}}^2$}}%
    }%
    \gplbacktext
    \put(0,0){\includegraphics[width={481.80bp},height={340.10bp}]{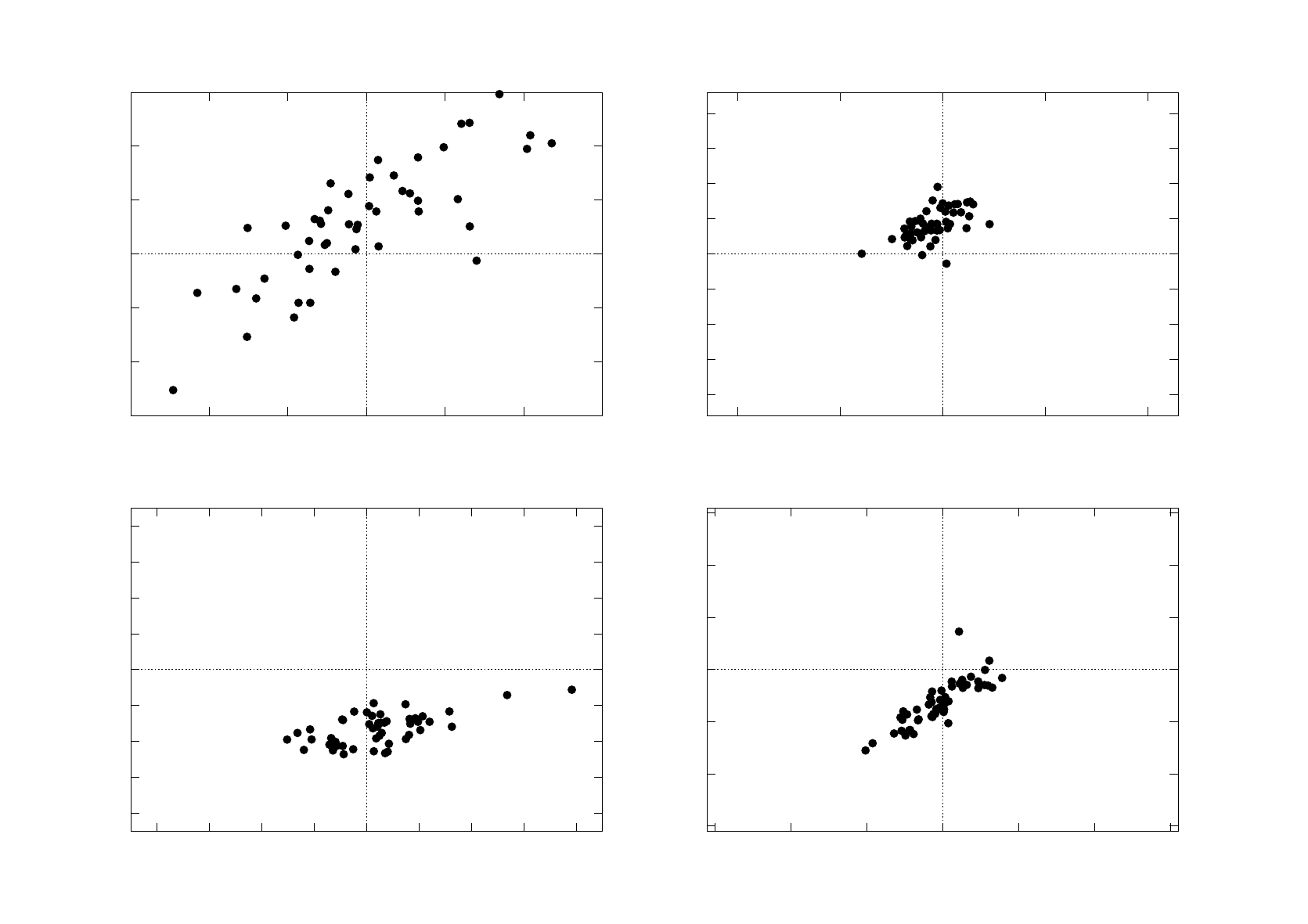}}%
    \gplfronttext
  \end{picture}%
\endgroup
  \caption{A summary of $50$ trials of the simulation study described above,
  where each point on the $(x,y)$ planes shows the difference of the EM-derived
  and SGV-derived point estimates respectively from the true parameter value.
  Clear bias is visible in both the range $\rho$ and smoothness $\nu$ for the
  SGV estimator, and is particularly pronounced in the latter.}
  \label{fig:sgv_compare}
\end{figure} 

\newpage

The results of the estimation study are summarized in Figure
\ref{fig:sgv_compare}, which shows scatter plots of pairs
$(\hat{\theta}_{\text{EM}} - \theta_{\text{true}}, \hat{\theta}_{\text{SGV}} -
\theta_{\text{true}})$, and several patterns immediately stand out.  The
estimates for the squared scale $\sigma^2$ agree reasonably well, but the range
parameter $\rho$ is at least slightly positively biased by SGV, a problem that
does not occur with the EM estimators. Similarly, the error variance $\eta^2$
seems negatively biased in the case of SGV. By far the biggest problem, however,
is that SGV-based estimates very consistently underestimate the smoothness
parameter $\nu$, to the point where every single estimate is below $2.25$. In
contrast, the EM-based estimator, while perhaps slightly more variable, does not
display any bias. While certain parameters are difficult to estimate and cannot
be estimated consistently under fixed domain asymptotics \citep{stein1999}, the
smoothness parameter $\nu$ in general can be estimated well in this setting, and
so considering how persistent this bias is, it seems reasonable to conclude that
this bias is an artifact of the SGV approximation.  As a final observation to
this point, re-trying the estimation in the SGV case even when initializing on
the exact true value of $\nu$ gives the same result, indicating that the implied
likelihood surface is clearly not flat in that region.

\begin{figure}[!ht]
  \centering
\begingroup
  \makeatletter
  \providecommand\color[2][]{%
    \GenericError{(gnuplot) \space\space\space\@spaces}{%
      Package color not loaded in conjunction with
      terminal option `colourtext'%
    }{See the gnuplot documentation for explanation.%
    }{Either use 'blacktext' in gnuplot or load the package
      color.sty in LaTeX.}%
    \renewcommand\color[2][]{}%
  }%
  \providecommand\includegraphics[2][]{%
    \GenericError{(gnuplot) \space\space\space\@spaces}{%
      Package graphicx or graphics not loaded%
    }{See the gnuplot documentation for explanation.%
    }{The gnuplot epslatex terminal needs graphicx.sty or graphics.sty.}%
    \renewcommand\includegraphics[2][]{}%
  }%
  \providecommand\rotatebox[2]{#2}%
  \@ifundefined{ifGPcolor}{%
    \newif\ifGPcolor
    \GPcolortrue
  }{}%
  \@ifundefined{ifGPblacktext}{%
    \newif\ifGPblacktext
    \GPblacktexttrue
  }{}%
  \let\gplgaddtomacro\g@addto@macro
  \gdef\gplbacktext{}%
  \gdef\gplfronttext{}%
  \makeatother
  \ifGPblacktext
    \def\colorrgb#1{}%
    \def\colorgray#1{}%
  \else
    \ifGPcolor
      \def\colorrgb#1{\color[rgb]{#1}}%
      \def\colorgray#1{\color[gray]{#1}}%
      \expandafter\def\csname LTw\endcsname{\color{white}}%
      \expandafter\def\csname LTb\endcsname{\color{black}}%
      \expandafter\def\csname LTa\endcsname{\color{black}}%
      \expandafter\def\csname LT0\endcsname{\color[rgb]{1,0,0}}%
      \expandafter\def\csname LT1\endcsname{\color[rgb]{0,1,0}}%
      \expandafter\def\csname LT2\endcsname{\color[rgb]{0,0,1}}%
      \expandafter\def\csname LT3\endcsname{\color[rgb]{1,0,1}}%
      \expandafter\def\csname LT4\endcsname{\color[rgb]{0,1,1}}%
      \expandafter\def\csname LT5\endcsname{\color[rgb]{1,1,0}}%
      \expandafter\def\csname LT6\endcsname{\color[rgb]{0,0,0}}%
      \expandafter\def\csname LT7\endcsname{\color[rgb]{1,0.3,0}}%
      \expandafter\def\csname LT8\endcsname{\color[rgb]{0.5,0.5,0.5}}%
    \else
      \def\colorrgb#1{\color{black}}%
      \def\colorgray#1{\color[gray]{#1}}%
      \expandafter\def\csname LTw\endcsname{\color{white}}%
      \expandafter\def\csname LTb\endcsname{\color{black}}%
      \expandafter\def\csname LTa\endcsname{\color{black}}%
      \expandafter\def\csname LT0\endcsname{\color{black}}%
      \expandafter\def\csname LT1\endcsname{\color{black}}%
      \expandafter\def\csname LT2\endcsname{\color{black}}%
      \expandafter\def\csname LT3\endcsname{\color{black}}%
      \expandafter\def\csname LT4\endcsname{\color{black}}%
      \expandafter\def\csname LT5\endcsname{\color{black}}%
      \expandafter\def\csname LT6\endcsname{\color{black}}%
      \expandafter\def\csname LT7\endcsname{\color{black}}%
      \expandafter\def\csname LT8\endcsname{\color{black}}%
    \fi
  \fi
    \setlength{\unitlength}{0.0500bp}%
    \ifx\gptboxheight\undefined%
      \newlength{\gptboxheight}%
      \newlength{\gptboxwidth}%
      \newsavebox{\gptboxtext}%
    \fi%
    \setlength{\fboxrule}{0.5pt}%
    \setlength{\fboxsep}{1pt}%
    \definecolor{tbcol}{rgb}{1,1,1}%
\begin{picture}(6802.00,3400.00)%
    \gplgaddtomacro\gplbacktext{%
      \csname LTb\endcsname
      \put(548,340){\makebox(0,0)[r]{\strut{}\footnotesize 0}}%
      \put(548,793){\makebox(0,0)[r]{\strut{}\footnotesize 2}}%
      \put(548,1246){\makebox(0,0)[r]{\strut{}\footnotesize 4}}%
      \put(548,1700){\makebox(0,0)[r]{\strut{}\footnotesize 6}}%
      \put(548,2153){\makebox(0,0)[r]{\strut{}\footnotesize 8}}%
      \put(548,2606){\makebox(0,0)[r]{\strut{}\footnotesize 10}}%
      \put(548,3059){\makebox(0,0)[r]{\strut{}\footnotesize 12}}%
      \put(680,197){\makebox(0,0){\strut{}\footnotesize -35}}%
      \put(1360,197){\makebox(0,0){\strut{}\footnotesize -30}}%
      \put(2040,197){\makebox(0,0){\strut{}\footnotesize -25}}%
      \put(2720,197){\makebox(0,0){\strut{}\footnotesize -20}}%
      \put(3400,197){\makebox(0,0){\strut{}\footnotesize -15}}%
      \put(4080,197){\makebox(0,0){\strut{}\footnotesize -10}}%
      \put(4760,197){\makebox(0,0){\strut{}\footnotesize -5}}%
      \put(5440,197){\makebox(0,0){\strut{}\footnotesize 0}}%
      \put(6120,197){\makebox(0,0){\strut{}\footnotesize 5}}%
    }%
    \gplgaddtomacro\gplfronttext{%
      \csname LTb\endcsname
      \put(75,1699){\rotatebox{-270}{\makebox(0,0){\strut{}\footnotesize Frequency (count)}}}%
      \put(3400,-100){\makebox(0,0){\strut{}\footnotesize $\ell_{\bS(\hat{\bth}_{\text{EM}}) + \bR(\hat{\bth}_{\text{EM}})}(\by) - \ell_{\bS(\hat{\bth}_{\text{SGV}}) + \bR(\hat{\bth}_{\text{SGV}})}(\by)$}}%
    }%
    \gplbacktext
    \put(0,0){\includegraphics[width={340.10bp},height={170.00bp}]{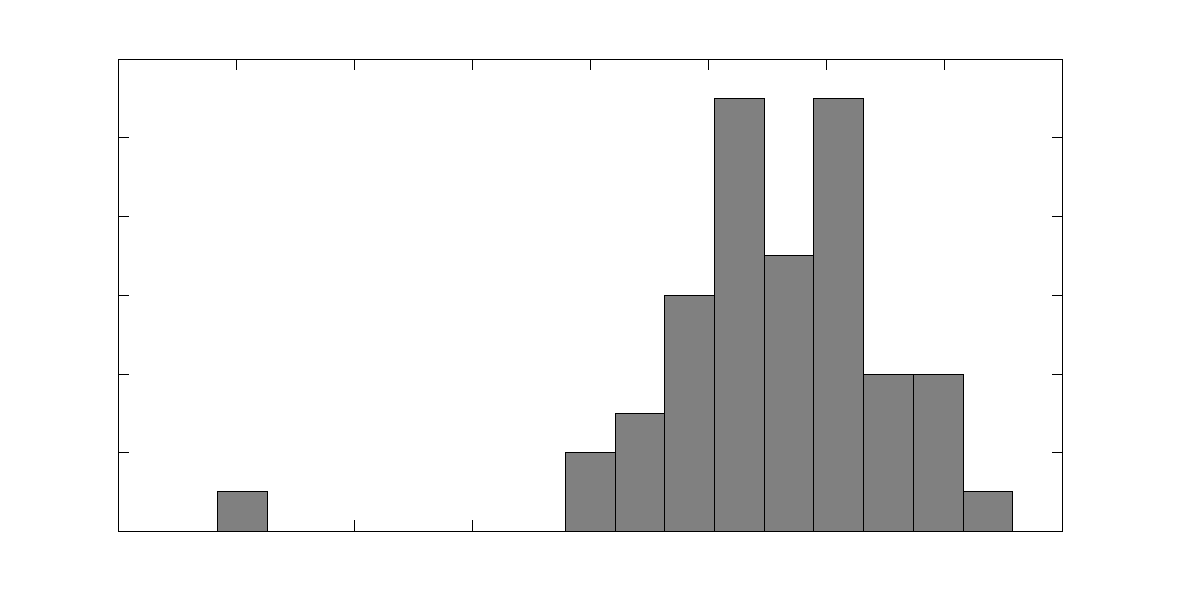}}%
    \gplfronttext
  \end{picture}%
\endgroup
  \caption{A histogram showing the difference in the exact negative
  log-likelihood between EM- and SGV-based parameters for the $50$ trials of the
  simulation study, with negative values indicating a superior terminal
  likelihood for the EM-based estimator.}
  \label{fig:sgv_compare_nll}
\end{figure}

Figure \ref{fig:sgv_compare_nll} shows a histogram of the difference in exact
negative log-likelihoods (assembled and computed with exact covariance matrices)
of the two estimates for each trial. In $49/50$ trials, the difference, computed
as
$
  \ell_{\bS(\hat{\bth}_{\text{EM}}) + \bR(\hat{\bth}_{\text{EM}})}(\by)
  -
  \ell_{\bS(\hat{\bth}_{\text{SGV}}) + \bR(\hat{\bth}_{\text{SGV}})}(\by),
$
is negative, indicating that the final likelihood of the EM-based estimators is
superior.  Considering the summary shown in Figure \ref{fig:sgv_compare}, this
is unsurprising, but the magnitude of these differences can be interpreted in
several different ways. On the one hand, for $15{,}000$ points, a difference of
$\approx 15$ log-likelihood units is very small, suggesting that the two methods
performed similarly. On the other hand, however, a difference of $15$ units has
the potential to be very meaningful, as the log-likelihood surface can be very
flat in sub-regions of the domain corresponding to parameters that have similar
interpolation properties but are very distant from the MLE in a pointwise sense
\citep{stein1999}. While that does not appear to have occurred here, a
difference of $15$ units absolutely merits serious attention, and the
consistently better likelihood values of the EM-based estimators is, in our
opinion, meaningful for this reason. If nothing else, any asymptotic theory
based on MLEs, regardless of whether or not it is applicable to MLEs computed in
this setting of spatial dependence, would absolutely not apply to an estimator
whose likelihood is sufficiently worse that it is not even approximately equal
to the MLE in finite computer precision.  \rev{With all of that said, however,
the EM-based estimators can take as much as twice as long to compute compared to
the SGV-based estimators, with runtimes ranging between thirty seconds and two
minutes compared to the consistent one minute of SGV}. 

As a final comparison, we consider the problem of predicting the process at the
center point $\bx_c = (1/2, 1/2)$ using $5{,}000$ nearest neighbors, with the
predictand denoted $z_c$. While not obviously the best method, it is standard
practice to perform prediction by estimating model parameters and then treating
them as the truth for subsequent operations like predicting. Figure
\ref{fig:centerinterp} provides a comparison of the absolute difference of the
true conditional mean from the plug-in prediction using estimated parameters for
each of the $50$ trials performed above, demonstrating the nontrivial effect
that the discrepancy in the estimated parameters can have on even the most
standard prediction problems.  In this comparison, the predictions made with the
EM-based estimators are almost uniformly better than the SGV-based estimates in
the sense of absolute difference from the true conditional mean.

\begin{figure}[!ht]
  \centering
\begingroup
  \makeatletter
  \providecommand\color[2][]{%
    \GenericError{(gnuplot) \space\space\space\@spaces}{%
      Package color not loaded in conjunction with
      terminal option `colourtext'%
    }{See the gnuplot documentation for explanation.%
    }{Either use 'blacktext' in gnuplot or load the package
      color.sty in LaTeX.}%
    \renewcommand\color[2][]{}%
  }%
  \providecommand\includegraphics[2][]{%
    \GenericError{(gnuplot) \space\space\space\@spaces}{%
      Package graphicx or graphics not loaded%
    }{See the gnuplot documentation for explanation.%
    }{The gnuplot epslatex terminal needs graphicx.sty or graphics.sty.}%
    \renewcommand\includegraphics[2][]{}%
  }%
  \providecommand\rotatebox[2]{#2}%
  \@ifundefined{ifGPcolor}{%
    \newif\ifGPcolor
    \GPcolortrue
  }{}%
  \@ifundefined{ifGPblacktext}{%
    \newif\ifGPblacktext
    \GPblacktexttrue
  }{}%
  \let\gplgaddtomacro\g@addto@macro
  \gdef\gplbacktext{}%
  \gdef\gplfronttext{}%
  \makeatother
  \ifGPblacktext
    \def\colorrgb#1{}%
    \def\colorgray#1{}%
  \else
    \ifGPcolor
      \def\colorrgb#1{\color[rgb]{#1}}%
      \def\colorgray#1{\color[gray]{#1}}%
      \expandafter\def\csname LTw\endcsname{\color{white}}%
      \expandafter\def\csname LTb\endcsname{\color{black}}%
      \expandafter\def\csname LTa\endcsname{\color{black}}%
      \expandafter\def\csname LT0\endcsname{\color[rgb]{1,0,0}}%
      \expandafter\def\csname LT1\endcsname{\color[rgb]{0,1,0}}%
      \expandafter\def\csname LT2\endcsname{\color[rgb]{0,0,1}}%
      \expandafter\def\csname LT3\endcsname{\color[rgb]{1,0,1}}%
      \expandafter\def\csname LT4\endcsname{\color[rgb]{0,1,1}}%
      \expandafter\def\csname LT5\endcsname{\color[rgb]{1,1,0}}%
      \expandafter\def\csname LT6\endcsname{\color[rgb]{0,0,0}}%
      \expandafter\def\csname LT7\endcsname{\color[rgb]{1,0.3,0}}%
      \expandafter\def\csname LT8\endcsname{\color[rgb]{0.5,0.5,0.5}}%
    \else
      \def\colorrgb#1{\color{black}}%
      \def\colorgray#1{\color[gray]{#1}}%
      \expandafter\def\csname LTw\endcsname{\color{white}}%
      \expandafter\def\csname LTb\endcsname{\color{black}}%
      \expandafter\def\csname LTa\endcsname{\color{black}}%
      \expandafter\def\csname LT0\endcsname{\color{black}}%
      \expandafter\def\csname LT1\endcsname{\color{black}}%
      \expandafter\def\csname LT2\endcsname{\color{black}}%
      \expandafter\def\csname LT3\endcsname{\color{black}}%
      \expandafter\def\csname LT4\endcsname{\color{black}}%
      \expandafter\def\csname LT5\endcsname{\color{black}}%
      \expandafter\def\csname LT6\endcsname{\color{black}}%
      \expandafter\def\csname LT7\endcsname{\color{black}}%
      \expandafter\def\csname LT8\endcsname{\color{black}}%
    \fi
  \fi
    \setlength{\unitlength}{0.0500bp}%
    \ifx\gptboxheight\undefined%
      \newlength{\gptboxheight}%
      \newlength{\gptboxwidth}%
      \newsavebox{\gptboxtext}%
    \fi%
    \setlength{\fboxrule}{0.5pt}%
    \setlength{\fboxsep}{1pt}%
    \definecolor{tbcol}{rgb}{1,1,1}%
\begin{picture}(5102.00,4534.00)%
    \gplgaddtomacro\gplbacktext{%
      \csname LTb\endcsname
      \put(378,936){\makebox(0,0)[r]{\strut{}\footnotesize 0.01}}%
      \put(378,1903){\makebox(0,0)[r]{\strut{}\footnotesize 0.03}}%
      \put(378,2870){\makebox(0,0)[r]{\strut{}\footnotesize 0.05}}%
      \put(378,3837){\makebox(0,0)[r]{\strut{}\footnotesize 0.07}}%
      \put(1054,310){\makebox(0,0){\strut{}\footnotesize 0.01}}%
      \put(2142,310){\makebox(0,0){\strut{}\footnotesize 0.03}}%
      \put(3230,310){\makebox(0,0){\strut{}\footnotesize 0.05}}%
      \put(4318,310){\makebox(0,0){\strut{}\footnotesize 0.07}}%
    }%
    \gplgaddtomacro\gplfronttext{%
      \csname LTb\endcsname
      \put(-359,2266){\rotatebox{-270}{\makebox(0,0){\strut{}\footnotesize SGV}}}%
      \put(2550,-20){\makebox(0,0){\strut{}\footnotesize EM}}%
    }%
    \gplbacktext
    \put(0,0){\includegraphics[width={255.10bp},height={226.70bp}]{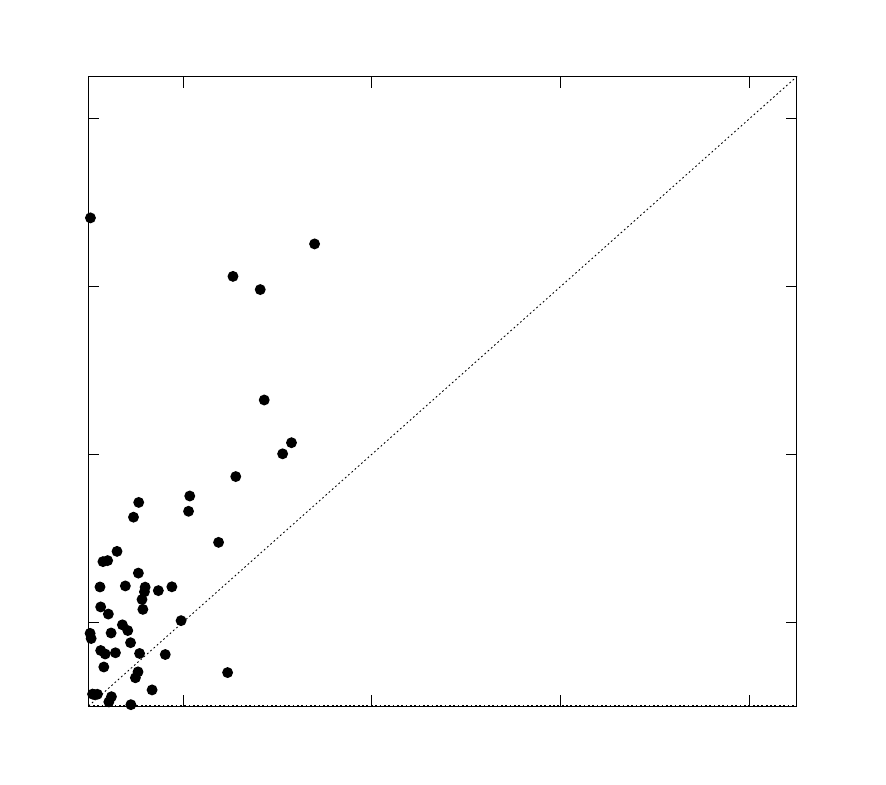}}%
    \gplfronttext
  \end{picture}%
\endgroup
  \caption{A comparison of the prediction performance for the value $z_c$
  located at the center point of the domain, $(1/2, 1/2)$. Each point has
  $(x,y)$ coordinates given by $\abs{\hat{z}_c(\bth_{\text{true}}) -
  \hat{z}_c(\hat{\bth})}$, where $\hat{\bth}$ denotes the estimated parameters
  using the EM method ($x$ coordinate) and the SGV method ($y$ coordinate).}
  \label{fig:centerinterp}
\end{figure}

And while assessing the significance of these improvements is not trivial, for a
very rough sense of their size, the prediction error variance of the process at
$\bx_c$ under the true law is approximately $0.0423$ without the nugget. The
mean absolute error between the conditional means was $0.0156$ for the SGV-based
predictions and $0.0062$ for the EM-based predictions, and so we see that their
difference of $0.0095$ is nearly a quarter the size of the prediction error
variance, which is not trivial.

\subsection{\rev{An Application to Doppler LIDAR data}}

\rev{
We now briefly show an application of this method to a real dataset using both a
nonstationary covariance model for the process $Z$ as well as for the
measurement error process $\eps$, so that $\bR$ is now a non-constant diagonal
matrix. The US Department of Energy's Atmospheric Radiation Measurement (ARM)
program provides high-resolution and high-frequency meteorological measurements
of a wide variety of quantities and at several locations throughout the US
\citep{stokes1994}. The Southern Great Plains (SGP) observatory is the largest
site, and among many other quantities it offers highly resolved vertical wind
profiles using Doppler LIDAR technology \citep{newsom2012,muradyan2020}. For a
more complete introduction and an example of a statistical study of this data,
see \citep{geoga2021}.  Figure \ref{fig:lidar} gives a visualization of the
data, highlighting in particular that the process' scale and measurement error
both clearly change with altitude.
}
\begin{figure}[!ht]
  \centering
\begingroup
  \makeatletter
  \providecommand\color[2][]{%
    \GenericError{(gnuplot) \space\space\space\@spaces}{%
      Package color not loaded in conjunction with
      terminal option `colourtext'%
    }{See the gnuplot documentation for explanation.%
    }{Either use 'blacktext' in gnuplot or load the package
      color.sty in LaTeX.}%
    \renewcommand\color[2][]{}%
  }%
  \providecommand\includegraphics[2][]{%
    \GenericError{(gnuplot) \space\space\space\@spaces}{%
      Package graphicx or graphics not loaded%
    }{See the gnuplot documentation for explanation.%
    }{The gnuplot epslatex terminal needs graphicx.sty or graphics.sty.}%
    \renewcommand\includegraphics[2][]{}%
  }%
  \providecommand\rotatebox[2]{#2}%
  \@ifundefined{ifGPcolor}{%
    \newif\ifGPcolor
    \GPcolortrue
  }{}%
  \@ifundefined{ifGPblacktext}{%
    \newif\ifGPblacktext
    \GPblacktexttrue
  }{}%
  \let\gplgaddtomacro\g@addto@macro
  \gdef\gplbacktext{}%
  \gdef\gplfronttext{}%
  \makeatother
  \ifGPblacktext
    \def\colorrgb#1{}%
    \def\colorgray#1{}%
  \else
    \ifGPcolor
      \def\colorrgb#1{\color[rgb]{#1}}%
      \def\colorgray#1{\color[gray]{#1}}%
      \expandafter\def\csname LTw\endcsname{\color{white}}%
      \expandafter\def\csname LTb\endcsname{\color{black}}%
      \expandafter\def\csname LTa\endcsname{\color{black}}%
      \expandafter\def\csname LT0\endcsname{\color[rgb]{1,0,0}}%
      \expandafter\def\csname LT1\endcsname{\color[rgb]{0,1,0}}%
      \expandafter\def\csname LT2\endcsname{\color[rgb]{0,0,1}}%
      \expandafter\def\csname LT3\endcsname{\color[rgb]{1,0,1}}%
      \expandafter\def\csname LT4\endcsname{\color[rgb]{0,1,1}}%
      \expandafter\def\csname LT5\endcsname{\color[rgb]{1,1,0}}%
      \expandafter\def\csname LT6\endcsname{\color[rgb]{0,0,0}}%
      \expandafter\def\csname LT7\endcsname{\color[rgb]{1,0.3,0}}%
      \expandafter\def\csname LT8\endcsname{\color[rgb]{0.5,0.5,0.5}}%
    \else
      \def\colorrgb#1{\color{black}}%
      \def\colorgray#1{\color[gray]{#1}}%
      \expandafter\def\csname LTw\endcsname{\color{white}}%
      \expandafter\def\csname LTb\endcsname{\color{black}}%
      \expandafter\def\csname LTa\endcsname{\color{black}}%
      \expandafter\def\csname LT0\endcsname{\color{black}}%
      \expandafter\def\csname LT1\endcsname{\color{black}}%
      \expandafter\def\csname LT2\endcsname{\color{black}}%
      \expandafter\def\csname LT3\endcsname{\color{black}}%
      \expandafter\def\csname LT4\endcsname{\color{black}}%
      \expandafter\def\csname LT5\endcsname{\color{black}}%
      \expandafter\def\csname LT6\endcsname{\color{black}}%
      \expandafter\def\csname LT7\endcsname{\color{black}}%
      \expandafter\def\csname LT8\endcsname{\color{black}}%
    \fi
  \fi
    \setlength{\unitlength}{0.0500bp}%
    \ifx\gptboxheight\undefined%
      \newlength{\gptboxheight}%
      \newlength{\gptboxwidth}%
      \newsavebox{\gptboxtext}%
    \fi%
    \setlength{\fboxrule}{0.5pt}%
    \setlength{\fboxsep}{1pt}%
    \definecolor{tbcol}{rgb}{1,1,1}%
\begin{picture}(8502.00,2266.00)%
    \gplgaddtomacro\gplbacktext{%
      \csname LTb\endcsname
      \put(359,587){\makebox(0,0)[r]{\strut{}\footnotesize 0.2}}%
      \put(359,870){\makebox(0,0)[r]{\strut{}\footnotesize 0.4}}%
      \put(359,1153){\makebox(0,0)[r]{\strut{}\footnotesize 0.6}}%
      \put(359,1436){\makebox(0,0)[r]{\strut{}\footnotesize 0.8}}%
      \put(359,1719){\makebox(0,0)[r]{\strut{}\footnotesize 1}}%
      \put(359,2002){\makebox(0,0)[r]{\strut{}\footnotesize 1.2}}%
      \put(1411,343){\makebox(0,0){\strut{}\footnotesize 14.85}}%
      \put(2875,343){\makebox(0,0){\strut{}\footnotesize 14.9}}%
      \put(4340,343){\makebox(0,0){\strut{}\footnotesize 14.95}}%
      \put(5805,343){\makebox(0,0){\strut{}\footnotesize 15}}%
      \put(7269,343){\makebox(0,0){\strut{}\footnotesize 15.05}}%
    }%
    \gplgaddtomacro\gplfronttext{%
      \csname LTb\endcsname
      \put(-154,1302){\rotatebox{-270}{\makebox(0,0){\strut{}\footnotesize Altitude (km)}}}%
      \put(3995,13){\makebox(0,0){\strut{}\footnotesize UTC Time (h)}}%
      \csname LTb\endcsname
      \put(8167,453){\makebox(0,0)[l]{\strut{}\footnotesize -4}}%
      \put(8167,665){\makebox(0,0)[l]{\strut{}\footnotesize -3}}%
      \put(8167,877){\makebox(0,0)[l]{\strut{}\footnotesize -2}}%
      \put(8167,1089){\makebox(0,0)[l]{\strut{}\footnotesize -1}}%
      \put(8167,1302){\makebox(0,0)[l]{\strut{}\footnotesize 0}}%
      \put(8167,1514){\makebox(0,0)[l]{\strut{}\footnotesize 1}}%
      \put(8167,1726){\makebox(0,0)[l]{\strut{}\footnotesize 2}}%
      \put(8167,1938){\makebox(0,0)[l]{\strut{}\footnotesize 3}}%
      \put(8167,2151){\makebox(0,0)[l]{\strut{}\footnotesize 4}}%
      \put(8562,1302){\rotatebox{-270}{\makebox(0,0){\strut{}\footnotesize Velocity (m/s)}}}%
    }%
    \gplbacktext
    \put(0,0){\includegraphics[width={425.10bp},height={113.30bp}]{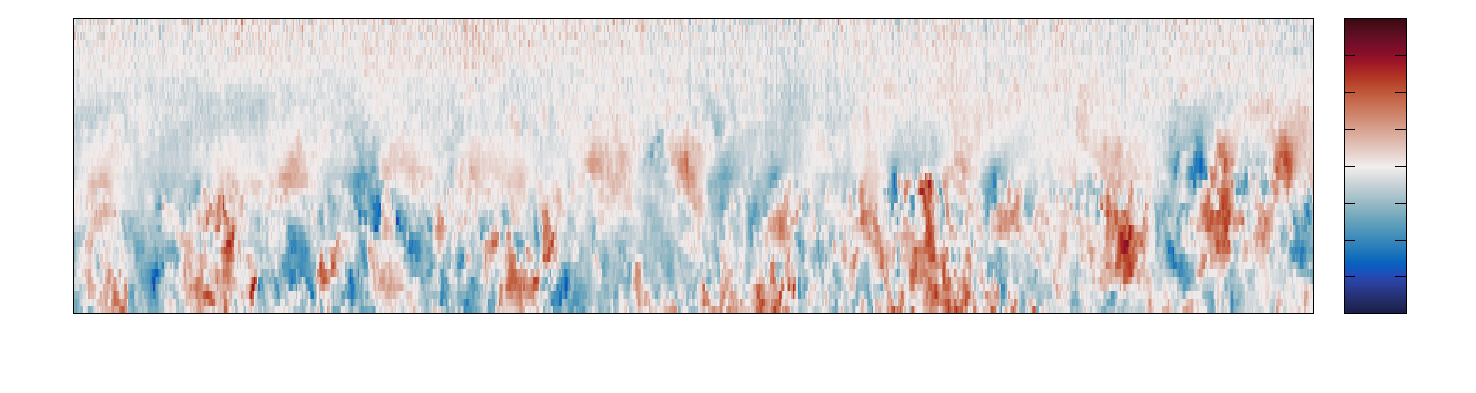}}%
    \gplfronttext
  \end{picture}%
\endgroup
  \caption{\rev{Doppler LIDAR vertical wind profiles from the main ARM field site
  (C1) in Oklahoma.}}
  \label{fig:lidar}
\end{figure}

\rev{
The model we will fit here, unlike in \cite{geoga2021}, is not concerned with
estimating the height of the atmospheric boundary layer (ABL). We instead use a
simple model for the space-time process given by
\begin{equation} \label{eq:lidar_model}
  Y(t, x) = \sigma(x) Z(t, x) + \eta(x) \eps(t, x),
\end{equation}
where $Z$ is a standard Mat\'ern model with a full geometric anisotropy
parameterized in terms of its inverse Cholesky factor, so that the geometric
anisotropy matrix $\bm{\Gamma}^{-1} = \bm{W} \bm{W}^T$ is represented by
parameters $W_{1,1} > 0$, $W_{1,2}$, and $W_{2,2} > 0$, and $\eps(t,x)$ is a
standard normal i.i.d.  noise whose standard deviation is modulated by the
spatial dependence of $\eta(x)$. In both cases, spatial dependence is modeled
simply with $\sigma(x) = \sum_{j=1}^3 w_j(x) \sigma_j$ (and similarly for
$\eta(x)$), where $w_j$ are normalized weights based on the distance from knot
points, placed at $0.2, 0.8$, and $1.2$ km. In total, the model has ten
parameters.
}

\rev{
The results of the estimation on the approximately $32{,}000$ measurements using
(i) independent blocks of size $80$ (two full vertical profiles), a natural
first method to try for trivial scalability and estimates from unbiased
estimating equations, (ii) a na\"ive Vecchia approximation that simply includes
the nugget, and (iii) the EM refinement of (ii), both computed with chunks of
size $20$ and three past chunks for conditioning, are shown in Table
\ref{tab:lidar_ests}. In all cases, the data are ordered as a vector time series
broken into two pieces (approximately, measurements below and then above
$0.6$m), and conditioning sets for each chunk were chosen from the most recent
past. As can be seen, the true likelihood of the data under this model is
nontrivially improved by the EM refinement, which considering that it retains
the theoretical guarantees with regard to bias thus gives an objectively better
estimator. Interestingly, while the block diagonal estimator is of course faster
and would still be faster even for much larger blocks, for this particular
dataset estimates with larger blocks happened to have lower terminal exact
likelihoods.
}

\begin{table}
  \centering
  \begin{tabular}{ccccccccccccc}
Model & $\ell_{\text{exact}}(\hat{\bth})$ & $\sigma_1$ & $\sigma_2$ & $\sigma_3$ & $W_{1,1}$ & $W_{1,2}$ & $W_{2,2}$ & $\nu$ & $\eta_1$ & $\eta_2$ & $\eta_3$ \\
\hline
Indept. Blocks  & -27317 &	 0.92  &	 0.36 &	 0.54 &	  8.30 &	0.17 & 773.36 & 0.61 &	 0.001 &	 0.01 &	 0.10 \\
Na\"ive Vecchia & -31898 &	 1.20  &	 0.48 &	 0.03 &	 12.26 &	 0.71 & 1069.43 &	 0.99 &	 0.001 &	 0.0003 &	 0.05 \\
EM + Vecchia    & \textbf{-31916} &	 1.12 &	 0.47 &	 0.04  &  14.41 &	 0.64 & 1272.12 &	 1.06 &	0.006 & $2 \times 10^{-5}$ &	 0.05 \\
\hline
  \end{tabular}
  \label{tab:lidar_ests}
  \caption{Exact negative log-likelihoods and point estimates for parameters of
  the model (\ref{eq:lidar_model}) for the Doppler LIDAR data shown in Figure
  \ref{fig:lidar} for three different estimation procedures: a block-diagonal
  covariance approximation (top row), na\"ive Vecchia estimates that use the
  error $\eps$ (middle row), and the EM-based estimator (bottom row).}
\end{table}

\rev{
As a final comment on this application, we remind the reader that this
covariance model, which still is in some ways inflexible, required ten
parameters.  A practitioner equipped with an $\bO(n)$ likelihood approximation
may still struggle to fit a model like this one if they are forced to use
gradient-free optimizers like Nelder-Mead, especially considering that many
natural parameterizations for covariance models give individual parameters
complex dependence relationships with others, making likelihood surfaces
particularly challenging to work with.  The method of this paper is much more
amenable to derivative-based optimization by virtue of providing an E function
that is easily automatically differentiated, and the accompanying software is
unique among its alternatives in that a practitioner can trivially provide their
own covariance functions---written in the same language as the rest of the
software, no less---and immediately obtain estimates computed with gradients and
Hessians.  The value of this flexibility and composability, and the freedom it
offers, is considerable.  
}

\section{Discussion} \label{sec:discussion}

We have introduced in this work a new method for performing parameter estimation
of Gaussian Process models whose likelihoods have been approximated in the
precision space but contain some kind of polluting noise which, if not handled
with special attention, severely reduces the accuracy of the precision-space
approximation \rev{or introduces problematic bias in the estimates}. This work
focused specifically on Vecchia's approximation \citep{vecchia1988} and additive
measurement error, but it applies with equal directness to any method that
provides sparse approximations for precision matrices and any polluting noise
whose covariance matrix at least admits a fast matrix-vector product.  This is a
broad category, including Markov random fields \citep{rue2005}, nearest neighbor
Gaussian processes \citep{datta2016,finley2019}, finite-element based SPDE
methods \citep{lindgren2011}, and surely many others. 

The method we propose here is potentially advantageous over its alternatives in
several ways. For one, \rev{its estimators provably correspond to the solution
of unbiased estimating equations, unlike any of its Vecchia- and
likelihood-based alternative besides simply ignoring that the nugget ruins
screening and using standard Vecchia approximations anyway}. Secondly, at least
in the case of Vecchia approximations, it removes all (numerical) sparse matrix
factorizations from the M steps in which optimization that benefits from
derivative information is performed. While not obviously impossible, a careful
implementation of the gradient and Hessian of the likelihood (or expected
likelihood) that is written in terms of Cholesky factors for the precision would
be very difficult to put together in a way that preserves performance,
parallelizability, and computational complexity. Practically speaking, the E
function given here that is optimized in each M step is the first objective
function in the setting of perturbed Vecchia approximations for which second
derivatives are conveniently computable, either by hand or by automatic
differentiation. In fact, to our knowledge this is the first publication that
uses true Hessian matrices of any Vecchia approximation, with the closest prior
work using expected Fisher information matrices \citep{guinness2021}. Moreover,
it makes use of the newly available automatic derivatives for $\mathcal{K}_\nu$
provided in \citep{geoga2022}, which can be composed at no extra effort with the
automatic derivatives of the E function itself and allow derivative based
estimation of smoothness parameters. 

\rev{
There are many questions that this work does not answer. Primarily, this method
can only be as good as the original approximation $\tbOm(\bth) \approx
\bS(\bth)^{-1}$, but particularly for non-adaptive approximation schemes like
Vecchia the quality of this approximation can vary.  For one example, separable
covariance functions that are popular in fields like computer experiments can
give rise to processes that do not screen well \citep{stein2011}. For kernels
that are smooth away from the origin, a property that is closely connected to
screening, many of the algebraic approximation tools discussed in the
introduction exploit the rank deficiency of off-diagonal matrix blocks. But that
level of rank deficiency changes with the dimension of the process (see
\cite{ambikasaran2016} for an example), making those algorithms less performant.
We expect this reduced approximation accuracy to carry over to precision-based
methods like Vecchia approximations. Another natural question to ask is how one
obtains standard errors for the resulting estimators. There is a straightforward
relationship between the Hessian of the log-likelihood for the data and
derivatives of the E and M functions \citep{dempster1977}, but considering that
we are using an approximated likelihood, this Hessian is almost certainly not
even asymptotically the precision of the MLE. Considering this additional
complication, we leave a more careful investigation of uncertainty
quantification in this setting to a future project.  With regard to efficiency,
in the symmetrized case using exact covariance matrices \cite{stein2013} gives a
bound on the efficiency cost of $1+S^{-1}$, which is quite small, and the
supplemental information shows similarly optimistic if not better results in the
setting of this work. The efficiency cost of Vecchia approximations themselves,
however, is an open area of study with only a few special cases having any
theory at all, and so this work cannot make any claims about the efficiency of
resulting estimators. Finally, we note that in the case of restricted maximum
likelihood estimation (REML), it is possible for the supremum of the likelihood
function to occur as the range tends to infinity \citep{stein2022}, which may
pose problems for iterative methods like the EM algorithm.
}

\rev{
Lastly, we observe that there are many potential improvements to our
approach here that are not discussed in this work. For example, many more
accurate methods for trace approximation exist outside of the Hutchinson
paradigm, like the peeling method \citep{lin2011}, first applied to GPs in
\citep{minden2016}. Adaptive Hutchinson- or Krylov-type methods also exist
\citep{meyer2021,persson2022,chen2022}, which may offer improvements in some
settings without requiring so much additional machinery.  Moreover, many
extensions, improvements, and generalizations of the EM algorithm exist (see
\citep{liu1994} for one of many examples), and it is conceivable that at least
some of those extensions may provide significant benefit in this setting. 
}

\renewcommand{\abstractname}{Acknowledgements}
\begin{abstract}
\noindent The authors thank Lydia Zoells for her careful copyediting.
\end{abstract}

\bibliography{references.bib}

\appendix

\section{Proofs} \label{sec:proofs}

\begin{proof}[Proof of Proposition $1$]
We first compute the conditional distribution of $\bz \sv \by$ under the
covariance function $K_{\bth_0}$, which is given in a particularly convenient
form as 
\begin{equation} \label{eq:zhat}
  \bz \sv \by,\bth_0 \sim \Nd\set{
    (\bS(\bth_0)^{-1} + \bR(\bth_0)^{-1})^{-1} \bR(\bth_0)^{-1} \by
    ,
    \;
    (\bS(\bth_0)^{-1} + \bR(\bth_0)^{-1})^{-1}
  }.
\end{equation}
To see this, we first observe that the joint distribution of $\by$ and $\bz$ is
given by
\begin{equation*} 
  \mat{
    \by \\ \bz
  }
  \sim
  \Nd\set{
    \mat{
      \bm{0} \\ \bm{0}
    },
    \;
    \mat{
      \bS + \bR & \bS \\
      \bS & \bS
    }
  },
\end{equation*}
where all matrices are assembled with $\bth_0$ and parameter indices will be
omitted when obvious or irrelevant.  The standard Gaussian conditional
distribution expression and the Sherman-Morrison-Woodbury lemma then gives that
$
  \bS - \bS (\bS + \bR)^{-1} \bS = (\bS^{-1} + \bR^{-1})^{-1}.
$
An even more basic identity of $\bS + \bR = \bR(\bS^{-1} + \bR^{-1})\bS$ gives
the conditional mean after multiplying through using the standard definition.

Next, we observe that 
\begin{equation*} 
  \mat{
    \bS + \bR & \bS \\
    \bS & \bS
  }
  =
  \mat{
    \I & \I \\
    & \I
  }
  \mat{
  \bR & \\
  & \bS
  }
  \mat{
    \I  \\
    \I & \I
  },
\end{equation*}
from which we immediately see that the log-determinant of the joint covariance
matrix is just $\log\abs{\bS} + \log\abs{\bR}$. From the further observation
that 
\begin{equation*} 
  \mat{
    \bS + \bR & \bS \\
    \bS & \bS
  }^{-1}
  =
  \mat{
    \I &  \\
    -\I & \I
  }
  \mat{
  \bR^{-1} & \\
  & \bS^{-1}
  }
  \mat{
    \I & -\I \\
    & \I
  },
\end{equation*}
we then can expand the quadratic form to be
\begin{equation*} 
  \mat{
    \by \\ \bz
  }^T
  \mat{
    \bS + \bR & \bS \\
    \bS & \bS
  }^{-1}
  \mat{
    \by \\ \bz
  }
  =
  (\by-\bz)^T \bR^{-1} (\by-\bz) + \bz^T \bS^{-1} \bz.
\end{equation*}
If $\bz$ has the distribution above given above, then the conditional
distribution of $\by-\bz$ given $\by$ is itself Gaussian with just a shifted
mean $\by-\hbz(\bth_0)$. The expectation of these two quadratic forms can then
be directly computed. For the sake of notational clarity, let $\bS_0$ and
$\bR_0$ denote matrices assembled with $\bth_0$. We then see that
\begin{align*} 
  \E_{\bz \sv \by, \bth_0} (\by-\bz)^T \bR^{-1} (\by-\bz)
  &= (\by - \hbz(\bth_0))^T \bR^{-1} (\by - \hbz(\bth_0)) 
  + \mathrm{tr}\left[ \bR^{-1} (\bS_0^{-1} + \bR_0^{-1})^{-1} \right], \quad
  \text{and}
  \\
  \E_{\bz \sv \by, \bth_0} \bz^T \bS^{-1} \bz
  &= \hbz(\bth_0)^T \bS^{-1} \hbz(\bth_0)
  + \mathrm{tr}\left( \bS^{-1} (\bS_0^{-1} + \bR_0^{-1})^{-1} \right).
\end{align*}
Combining these terms, including the trace terms, we reach the proposed
conclusion.
\end{proof}

\rev{
We now work towards the proof of Theorem $1$ with a sequence of lemmas. Because
the arguments are very general, they are given at no additional complexity in
much broader generality than is required for the desired result.
}
\begin{lemma}
\rev{
  Let $Z$ be a random process with a covariance function indexed by parameters
  $\bth$, and let $\gee(\bz)$ be a function with the following properties:
  \begin{itemize}
  \item[(i)] $\gee$ is integrable,
  \item[(ii)] $\gee$ is differentiable with respect to $\bth$,
  \item[(iii)] Each component of $\nabla_{\bth} \gee$ is bounded by an
  integrable function,  and
  \item[(iv)] $\nabla_{\bth} \gee$ yields unbiased estimating equations (UEEs)
  for $\bth$ under the law of $Z$ for all finite dimensional samples of $Z$ (so
  that, in particular, $\E_{\bth} \nabla_{\bth} \gee(\bz) = \bm{0}$, where $\bz$
  is any finite-dimensional sample of the process $Z$ and $\E_{\bth}$ is the
  expectation under the true law with parameters $\bth$).
  \end{itemize}
  Second, let $h_{\bm{\phi}}$ be any valid family of laws for a process $W \perp
  Z$ whose index $\bm{\phi}$ is disjoint from $\bth$. Then $\nabla_{\bth} [ \gee
  \star \ h_{\bm{\phi}}(\by) ]$ gives UEEs for all finite-dimensional samples of
  $Y = Z+W$ under the law of $Y = Z + W$.
}
\end{lemma}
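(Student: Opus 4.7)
The plan is to interpret $\gee \star h_{\bm{\phi}}(\by)$ as the natural marginalized criterion on $\by$-space induced by combining the $\bz$-criterion $\gee$ with the noise law $h_{\bm{\phi}}$, and then to reduce the required UEE statement on $\bY$ back to hypothesis (iv) on $\bZ$ via differentiation under the integral sign together with the tower property of conditional expectation.

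First I would apply (iii), together with the fact that $h_{\bm{\phi}}$ does not depend on $\bth$ (since $\bm{\phi}$ is disjoint from $\bth$), to justify interchanging $\nabla_{\bth}$ with the integral that defines $\gee \star h_{\bm{\phi}}$ via dominated convergence. A short algebraic rearrangement of the resulting expression identifies $\nabla_{\bth}(\gee \star h_{\bm{\phi}})(\by)$ with a conditional expectation of $\nabla_{\bth}\gee(\bZ)$ given $\bY = \by$, taken under the natural joint law in which $\bZ$ has its prescribed law indexed by $\bth$ and $\bm{W} \sim h_{\bm{\phi}}$ is independent of $\bZ$. The integrability condition (i) ensures that this conditional expectation is well-defined.

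Next I would take the outer expectation under the true law of $\bY = \bZ + \bm{W}$ and apply Fubini, which is justified by (i) and the dominating integrable function supplied by (iii). Because $h_{\bm{\phi}}$ is the correct law of $\bm{W}$ and $\bth$ indexes the correct law of $\bZ$, the tower property collapses the iterated expectation $\E_{\bY}\bigl[\E[\nabla_{\bth}\gee(\bZ)\mid \bY]\bigr]$ to the marginal $\E_{\bZ}[\nabla_{\bth}\gee(\bZ)]$, which vanishes by (iv). This is precisely the UEE property for $\bth$ under the law of $\bY$.

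The main obstacle is the algebraic identification in the first step: one must verify that differentiating the defining integral of $\gee \star h_{\bm{\phi}}$ with respect to $\bth$ and rearranging really does produce a conditional expectation of $\nabla_{\bth}\gee(\bZ)$ under the joint law with $\bZ \perp \bm{W}$, with no spurious Jacobian or boundary contribution. The disjointness of $\bth$ and $\bm{\phi}$ is what makes this clean, since it guarantees that the $\bth$-gradient lands entirely on $\gee$ and not on any weighting from $h_{\bm{\phi}}$. Once this identification is secured and Fubini is justified by (i) and (iii), independence of $\bZ$ and $\bm{W}$ together with the tower property and (iv) deliver the conclusion with no additional structural input.
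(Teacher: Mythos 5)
The step you yourself flag as ``the main obstacle'' is where the argument breaks. Differentiating under the integral sign gives $\nabla_{\bth}(\gee \star h_{\bm{\phi}})(\by) = \int \nabla_{\bth}\gee(\by - \btau)\, h_{\bm{\phi}}(\btau)\dif\btau$, i.e.\ $\nabla_{\bth}\gee(\bz)$ averaged against the weight $h_{\bm{\phi}}(\by - \bz)$ alone. The object you want to identify it with, $\E\left[\nabla_{\bth}\gee(\bZ) \sv \bY = \by\right]$ under the joint law with $\bZ \perp \bm{W}$, is genuinely different: it weights $\nabla_{\bth}\gee(\bz)$ by the conditional law of $\bZ$ given $\bY=\by$, which is proportional to $(\text{law of }\bZ)(\dif\bz)\, h_{\bm{\phi}}(\by-\bz)$ and carries a normalization by the marginal density of $\bY$ at $\by$. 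The two coincide only in degenerate situations, and the lemma does not even assume $Z$ has a Lebesgue density, so the conditional-density rewriting is not available in general. As a consequence, your tower-property step establishes $\E_{\bY}\bigl\{\E[\nabla_{\bth}\gee(\bZ)\sv\bY]\bigr\} = \E[\nabla_{\bth}\gee(\bZ)] = \bm{0}$, which is the unbiasedness of a different estimating function (essentially the E-step quantity), not of $\nabla_{\bth}[\gee\star h_{\bm{\phi}}]$ evaluated at $\bY$; and you cannot patch it by inserting the correct conditional weights, because then the quantity you control is no longer the gradient of the convolution.

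The paper's proof takes a route that never conditions on $\bY$ and never invokes the conditional law of $\bZ$ given $\bY$. It writes the convolution as an expectation over the noise variable, so that the target expectation becomes $\E_{\bth}\nabla_{\bth}\E_{\bm{\phi}}\,\gee(Y - W)$, exchanges $\nabla_{\bth}$ with the two expectations via Fubini--Tonelli and dominated convergence (this is where condition (iii) enters), conditions on $W$, and then uses the distributional identity $Y - W = Z$ together with $Z \perp W$ to drop the conditioning and reduce the inner expectation to $\E_{\bth}\nabla_{\bth}\gee(\bZ)$, which vanishes by (iv). The load-bearing (and delicate) ingredient is thus the coupling of the integration variable in the convolution to the noise actually present in $Y$, which is exactly what your conditional-expectation identification attempts to substitute for; without that ingredient, or a corrected version of your identification, the reduction to hypothesis (iv) does not go through.
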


\begin{proof}
\rev{
Picking an arbitrary finite-dimensional sample and starting from a simple
rewrite of the direct expression
\begin{equation*} 
  \E_{\bth} \nabla_{\bth} \int \gee(Y - \btau) h_{\bm{\phi}}(\btau) \dif \btau,
\end{equation*}
a sequence of simple manipulations gives that this expression equals
\begin{align*} 
  \E_{\bth} \nabla_{\bth} \E_{\bm{\phi}} \gee(Y - W)
  =
  \E_{\bm{\phi}} \E_{\bth} \nabla_{\bth} \gee(Y - W)
  =
  \E_{\bm{\phi}} \E_{\bth} E\left[\nabla_{\bth} \gee(Y - W) \sv W \right],
\end{align*}
where the first equality uses Fubini-Tonelli and dominated convergence (possible
due to requirement (iii) on $\gee$). But by the distributional assumption of the
theorem, $Y - W = Z$, and so this last term is equal to
\begin{equation*} 
  \E_{\bm{\phi}} \E_{\bth} 
  E\left[
  \nabla_{\bth} \gee(Z)
  \sv
  W
  \right]
  =
  \E_{\bm{\phi}} \E_{\bth} 
  \nabla_{\bth} \gee(Z),
\end{equation*}
where the conditional expectation can be dropped by the hypothesis that $Z \perp
W$.  Since the inner expectation is equal to zero by condition (ii) on $\gee$,
this quantity is zero and the proof is complete.
}
\end{proof}

\begin{lemma}
\rev{
Let $Z$ be a process whose finite-dimensional covariance matrices are indexed by
$\bS(\bth)$, and let $\tilde{\bS}(\bth)$ be a matrix-valued function yielding
full-rank matrices that are differentiable with respect to $\bth$ such that
\begin{equation} \label{eq:lemma2_uee}
  \E_{\bth} \left( \mathrm{tr} \set{ \tilde{\bS}(\bth)^{-1} \left[ \partial_{\theta_j} \tilde{\bS}(\bth)
  \right] } - 
  \bz^T  
  \tilde{\bS}(\bth)^{-1} \left[\partial_{\theta_j} \tilde{\bS}(\bth) \right] \tilde{\bS}(\bth)^{-1}
  \bz \right) = \bm{0},
\end{equation}
where $\E_{\bth}$ is an expectation under the true law of $Z$.  Second, let $W$
be a Gaussian process that is independent of $Z$ that yields finite-dimensional
distributions with covariance matrices $\bR(\bm{\phi})$. Then the gradient of
the function
\begin{equation*} 
  \tilde{\ell}_{\bth}(\by) 
  =
  \log \abs{\tilde{\bS}(\bth) + \bR(\bm{\phi})} 
  + 
  \by^T \left(
    \tilde{\bS}(\bth) + \bR(\bm{\phi})
  \right)^{-1} \by
\end{equation*}
yields UEEs for $\bth$ under the law of $Y = Z+W$.
}
\end{lemma}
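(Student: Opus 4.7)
The strategy is to apply Lemma 1 to the natural choice $\gee_\bth(\bz) := \log|\tilde\bS(\bth)| + \bz^T \tilde\bS(\bth)^{-1}\bz$, i.e., twice the Gaussian negative log-likelihood (up to additive constants) built from the approximating covariance $\tilde\bS$. Taking the $\bth$-gradient of $\gee_\bth$ term-by-term recovers exactly the integrand appearing in (\ref{eq:lemma2_uee}), so hypothesis (iv) of Lemma 1 is precisely the stated assumption of Lemma 2. The regularity conditions (i)--(iii) of Lemma 1 (integrability, differentiability in $\bth$, and a dominating integrable bound on the gradient) are routine to verify from the assumed differentiability and full-rank character of $\tilde\bS(\bth)$ together with standard Gaussian moment bounds on $\bz$.

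With those hypotheses in hand, Lemma 1 immediately delivers that $\nabla_\bth\bigl[(\gee_\bth \star h_{\bm{\phi}})(\by)\bigr]$ provides UEEs for $\bth$ under the law of $Y = Z+W$. The remaining step is to identify this conclusion with the target statement about $\tilde\ell_\bth$. The essential observation is that $\tilde\ell_\bth(\by)$ is, up to a $\bth$-independent additive constant, twice the negative log-density of the Gaussian $\Nd(\bm{0}, \tilde\bS(\bth) + \bR(\bm{\phi}))$, and by Gaussian self-conjugacy this density is exactly the marginal of $\by$ in the approximating joint model where $\bz \sim \Nd(\bm{0}, \tilde\bS(\bth))$ is independent of the noise whose density is $h_{\bm{\phi}}$. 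Consequently, the function whose $\bth$-gradient Lemma 1 controls is, in the relevant sense, exactly $\tilde\ell_\bth$.

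I expect the main obstacle to be making the final identification of the two gradients precise: one must verify that the convolution object delivered by Lemma 1 and $\tilde\ell_\bth$ differ only by terms whose $Y$-expectation is $\bth$-stationary, so that the UEE property passes across cleanly rather than merely matching the underlying Gaussian structure. A clean way to close the gap is to invoke the EM-style decomposition already used in the proof of Proposition 1, which expresses the marginal negative log-density as the joint approximate negative log-density minus the entropy of the conditional law of $\bz \sv \by$; the entropy piece contributes only a trace whose $\bth$-gradient is $\bth$-stationary in expectation under $Y$. Once that decomposition is in place, the unbiased estimating equation property for $\nabla_\bth \tilde\ell_\bth$ transfers directly from Lemma 1, completing the proof.
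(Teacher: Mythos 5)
Your choice $\gee_{\bth}(\bz) = \log\abs{\tilde{\bS}(\bth)} + \bz^T\tilde{\bS}(\bth)^{-1}\bz$ makes hypothesis (iv) of Lemma 1 literally identical to (\ref{eq:lemma2_uee}), but it breaks the step that carries all of the content of the lemma: the identification of $\gee \star h_{\bm{\phi}}$ with $\tilde{\ell}_{\bth}$. Convolution does not commute with the logarithm. With your $\gee$, the object that Lemma 1 controls is $[\gee_{\bth}\star h_{\bm{\phi}}](\by) = \E_{W\sim\Nd(\bm{0},\bR(\bm{\phi}))}\,\gee_{\bth}(\by - W) = \log\abs{\tilde{\bS}(\bth)} + \by^T\tilde{\bS}(\bth)^{-1}\by + \mathrm{tr}\set{\tilde{\bS}(\bth)^{-1}\bR(\bm{\phi})}$, which is not $\tilde{\ell}_{\bth}(\by) = \log\abs{\tilde{\bS}(\bth)+\bR(\bm{\phi})} + \by^T(\tilde{\bS}(\bth)+\bR(\bm{\phi}))^{-1}\by$; the two gradients do not have the same expectation under the law of $Y$, so the UEE property does not ``pass across.'' The appeal to Gaussian self-conjugacy concerns the convolution of \emph{densities} (which does give the $\Nd(\bm{0},\tilde{\bS}(\bth)+\bR(\bm{\phi}))$ marginal), not the convolution of a log-density with a density, and that conflation is precisely where the proof stops. (A smaller issue: condition (i) of Lemma 1 is not ``routine'' for your $\gee$, since a quadratically growing function is not integrable in the sense the lemma uses for its Fubini step.)

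The proposed bridge does not close the gap. The averaging produced by Lemma 1 is against the marginal law $h_{\bm{\phi}}$ of the noise, not against the conditional law of $\bz \sv \by$, so it does not match the EM decomposition you invoke; and even granting Fisher's identity, which writes $\nabla_{\bth}\tilde{\ell}_{\bth}(\by)$ as a conditional expectation of the joint approximate score under the \emph{approximate} conditional law of $\bz \sv \by$, the outer expectation in the UEE statement is under the \emph{true} law of $Y$, so the composition is not the true joint law and (\ref{eq:lemma2_uee}) cannot be applied. Moreover, the entropy of the conditional law, $\tfrac{1}{2}\log\abs{2\pi e\,(\tilde{\bS}(\bth)^{-1}+\bR(\bm{\phi})^{-1})^{-1}}$, is a deterministic function of $\bth$ whose gradient is not zero in general, so the claim that it is ``$\bth$-stationary in expectation under $Y$'' is unsupported. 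The paper takes a different route that avoids the identification problem entirely: it applies Lemma 1 with $\gee$ equal to the Gaussian \emph{density} with covariance $\tilde{\bS}(\bth)$, so that $\gee\star h_{\bm{\phi}}$ is exactly the $\Nd(\bm{0},\tilde{\bS}(\bth)+\bR(\bm{\phi}))$ density and $\tilde{\ell}_{\bth}$ is its negative logarithm, leaving only the passage from the gradient of the convolution to the gradient of its logarithm. To repair your argument you would need an actual proof that the gradient of $\E_{W}\gee_{\bth}(\by-W)$ and the gradient of $\tilde{\ell}_{\bth}(\by)$ differ by something with vanishing expectation under $Y$; as written this is asserted rather than shown, and it is the whole difficulty of the lemma.
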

\begin{proof}
\rev{
  Let $\gee$ be the multivariate normal density using $\tilde{\bS}(\bth)$ in
  place of $\bS(\bth)$. We first check the conditions to apply Lemma $1$. By
  virtue of being a Gaussian density, $\gee$ is integrable, so (i) is met.
  Requirement (ii) is an explicit assumption of the Lemma and is thus also met.
  Component (iii) can be confirmed by observing that 
  $
    \partial_{\theta_j} \gee(\bz)
    =
    \gee(\bz) \set{ \partial_{\theta_j} \log \gee(\bz)}.
  $
  With some expansion, $\int \partial_{\theta_j} \gee$ is proportional to
  \begin{equation*} 
    A(\bth) + C(\bth) \int e^{-\bz^T \tilde{\bS}(\bth)^{-1} \bz}
    \cdot
    \bz^T 
    \tilde{\bS}(\bth)^{-1}
    [\partial_{\theta_j} \tilde{\bS}(\bth)]
    \tilde{\bS}(\bth)^{-1}
    \bz
    \; \;
    \dif \bz,
  \end{equation*}
  where $A(\bth)$ and $C(\bth)$ are constants that depend on $\bth$ but not
  $\bz$. From this form, we see that a sufficient condition for this integral to
  be finite is that $\tilde{\bS}(\bth)$ is full rank (an assumption) and that
  $\norm{\tilde{\bS}(\bth)^{-1} [\partial_{\theta_j} \tilde{\bS}(\bth)]
  \tilde{\bS}(\bth)^{-1}}$ is finite. But this latter norm condition is
  guaranteed by the differentiability of $\tilde{\bS}(\bth)$, and so (iii) is
  satisfied. Since (iv) is also an assumption of the Lemma, all conditions of
  Lemma $1$ have been satisfied.
  For the second step, let $h_{\bm{\phi}}$ be the density corresponding to $W$.
  With these choices of $\gee$ and $h_{\bm{\phi}}$ and the observation that $Z
  \perp W$ means that all finite-dimensional marginals of $Z$ and $W$ will be
  jointly Gaussian and that $\tilde{\ell}_{\bth}$ is nothing but $\log [\gee
  \star h_{\bm{\phi}}]$. And since $\gee \star h_{\bm{\phi}}$ is supported on
  the entire plane, $\nabla_{\bth} [\gee \star h_{\bm{\phi}}] = 0$ if and only
  if $\nabla_{\bth} \log [\gee \star h_{\bm{\phi}}]$ does. Applying Lemma $1$ to
  $\gee \star h_{\bm{\phi}}$ thus gives the desired conclusion.
}
\end{proof}

\rev{
Finally, the proof of the theorem reduces to the following:
\begin{proof}[Proof of Theorem $1$]
  Claim $1$ follows from the elementary property of the EM algorithm that each M
  step provably does not decrease the marginal log-likelihood (Theorem $1$ in
  \citep{dempster1977}). For claim $2$, we note that the hypotheses of the
  Theorem on $\tbOm(\bth) \approx \bS(\bth)^{-1}$ satisfy the requirements of
  Lemma $2$.  Since Theorem $1$ further assumes that a parameter is used for
  either $\bS$ or $\bR$ but never both, we simply partition $\bth$ into separate
  components for $Z$ and $\eps$, denoted $\bth'$ and $\bm{\phi}$, and apply
  Lemma $2$ twice. In the first application, we let $(Z, \bth')$ correspond to
  $(Z, \bth)$ in Lemma $2$ and $(\eps, \bm{\phi})$ to $(W, \bm{\phi})$. In the
  second application, we use the reverse association.
\end{proof}
}

\newpage

\section{Supplemental Material} \label{sec:supp}

\rev{
The following figure shows the path of EM iteration for $S =
\set{5,25,50,75,100,125}$ SAA vectors (where each larger set adds to the past
one, as opposed to completely re-drawing new ones) for the first $10$ simulated
datasets used in the simulation study against SGV.
}

\begin{figure}[!ht]
  \centering
  \includegraphics[width=0.9\textwidth]{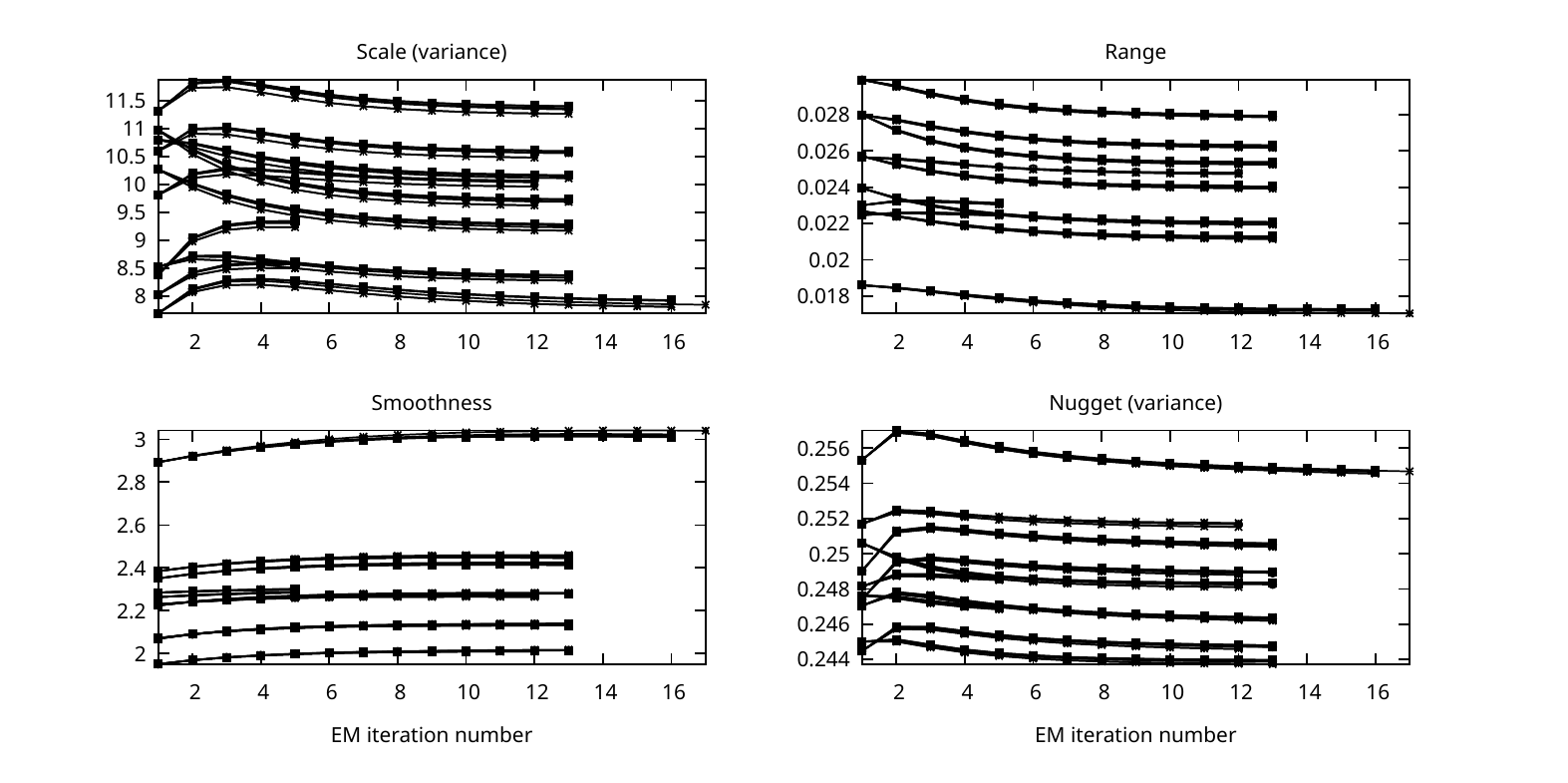}
\end{figure}

\rev{
This figure shows $10 \times 6 \times 4 = 240$ lines in total, but the important
observation to make is that the $6$ lines that correspond to the EM path for a
parameter using the same simulated data are very well-clustered, particularly
when compared to the natural variability of the estimate for the given data size
and model, which can be roughly inferred simply by studying the y-axis range of
the plots. While this is again not proof of anything, it is encouraging that the
number of SAA vectors seems very unlikely to levy a bothersome efficiency cost
on the resulting estimators.
The below figure shows a similar summary, now giving only a scatter plot of the
final estimate for each collection of SAA vectors (in increasing order) and each
data trial. The conclusion is again clear: the efficiency cost of the stochastic
trace, even when using only five sampling vectors, is negligible compared to the
intrinsic variability of the estimator for this model and data setting.
}

\begin{figure}[!ht]
  \centering
  \includegraphics[width=0.9\textwidth]{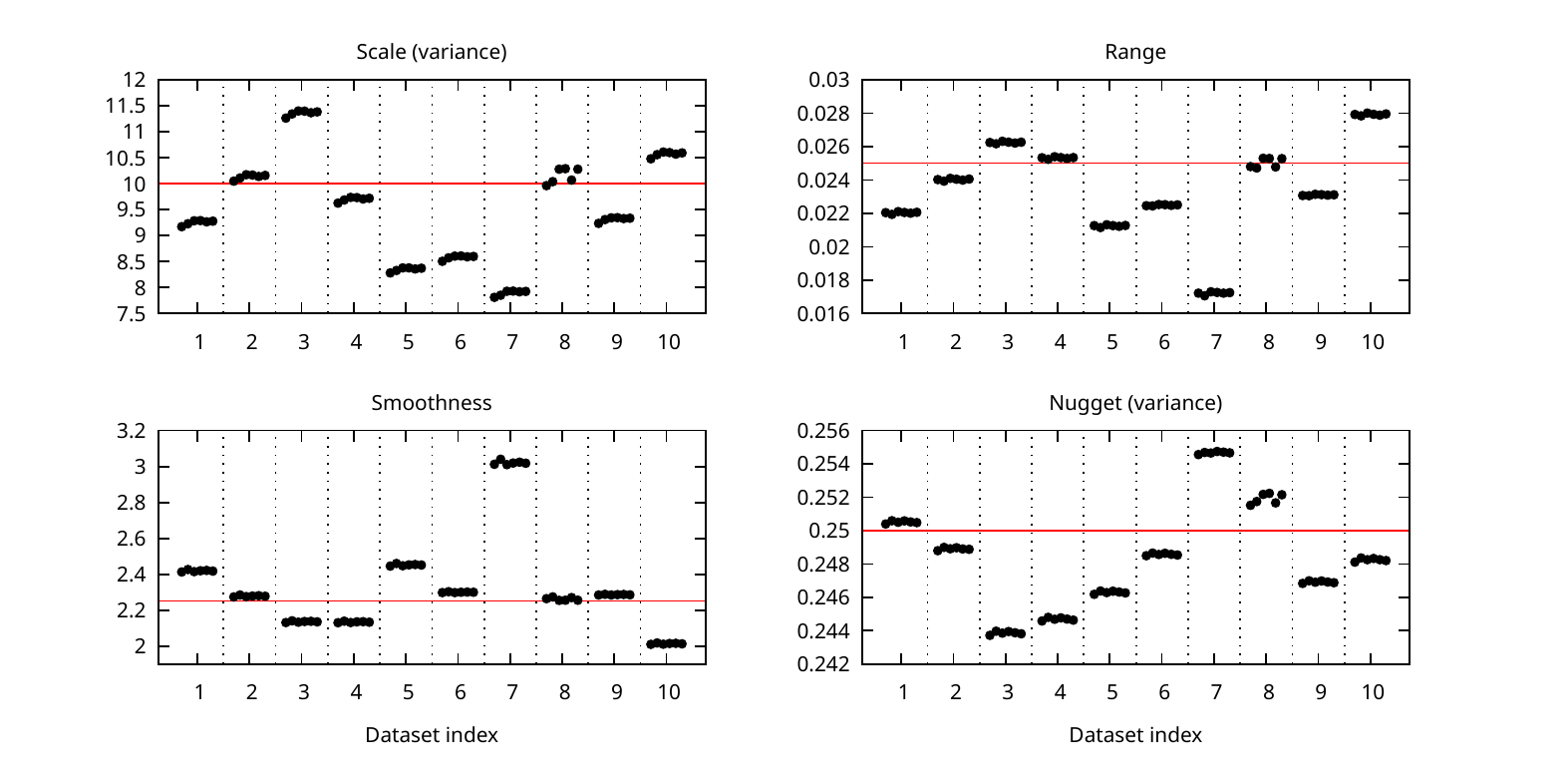}
\end{figure}

\end{document}